\newcolumntype{d}[1]{D{.}{.}{#1}}
\newtheorem{Theorem}{theorem}[section]
\newtheorem{theorem}[Theorem]{Theorem}
\newtheorem{claim}[Theorem]{Claim}
\newtheorem{corollary}[Theorem]{Corollary}
\newtheorem{lemma}[Theorem]{Lemma}
\newtheorem{proposition}[Theorem]{Proposition}
\theoremstyle{definition}
\newtheorem{definition}[Theorem]{Definition}
\newtheorem{assumption}[Theorem]{Assumption}
\newtheorem{example}[Theorem]{Example}
\theoremstyle{remark}
\newtheorem{remark}[Theorem]{Remark}
\DeclareMathOperator{\sgn}{sgn}
\newcommand{\sqrtabs}[1]{\sqrt{|{#1}|}}
\DeclareMathOperator*{\argmax}{argmax\,} 
\newcommand*\xh{\overline{x}}
\newcommand*\bt{\tilde{b}}
\newcommand*\uw{\underline{w}}
\newcommand*\ow{\overline{w}}
\newcommand*\tw{\tilde{w}}
\newcommand*\ot{\overline{t}}
\newcommand*\ut{\underline{t}}
\newcommand*\eps{\varepsilon}
\newcommand*\N{\mathbb{N}}
\newcommand*\R{\mathbb{R}}
\newcommand*\C{\mathbb{C}}
\begin{document}
\baselineskip=1.35\baselineskip

\title{Best-response dynamics in directed network games\thanks{We thank Yann Bramoull\'e, P\'eter Cs\'oka, P. Jean-Jacques Herings, Botond K\H{o}szegi, M\'anuel L\'aszl\'o M\'ag\'o, Ronald Peeters, Mikl\'os Pint\'er, \'Ad\'am Szeidl, Frank Thuijsman, Yannick Viossat, and Mark Voorneveld for valuable feedback and suggestions.}}
\author{P\'eter Bayer\thanks{Corresponding author. Universit\'e Grenoble Alpes, GAEL, 1241 Rue des R\'esidences, 38400 Saint Martin d'H\`eres, France. E-mail: peter.bayer@univ-grenoble-alpes.fr.}
\and Gy\"orgy Kozics\thanks{Central European University, Department of Economics and Business, N\'ador utca 9, 1051 Budapest, Hungary. E-mail: kozics\_gyorgy@phd.ceu.edu.}
\and N\'ora Gabriella Sz\H{o}ke\thanks{Universit\'e Grenoble Alpes, Institut Fourier, CS 40700, 38058 Grenoble cedex 09, France. E-mail: nora.gabriella.szoke@gmail.com. Research supported by the Swiss National Science Foundation, Early Postdoc.Mobility fellowship no.~P2ELP2\_184531.}}
\date{\today}
\maketitle

\begin{abstract}
\baselineskip=1.35\baselineskip
\noindent We study public goods games played on networks with possibly non-recip\-rocal relationships between players. Examples for this type of interactions include one-sided relationships, mutual but unequal relationships, and parasitism. It is well known that many simple learning processes converge to a Nash equilibrium if interactions are reciprocal, but this is not true in general for directed networks. However, by a simple tool of rescaling the strategy space, we generalize the convergence result for a class of directed networks and show that it is characterized by transitive weight matrices. Additionally, we show convergence in a second class of networks; those rescalable into networks with weak externalities. We characterize the latter class by the spectral properties of the absolute value of the network's weight matrix and show that it includes all directed acyclic networks.
\\*[0.5\baselineskip]
\textbf{Keywords}: Networks, externalities, local public goods, potential games, non-reciprocal relations.
\\*[0.5\baselineskip]
\textbf{JEL classification}: C72, D62, D85.
\end{abstract}

\newpage

\section{Introduction}
All social and economic networks feature relationships which cannot be described as a partnership of equals. There are relationships between pairs of agents in which only one party is interested and the other is not. Some relationships are beneficial for one party but harmful for the other. Even when both parties benefit or both are harmed, the extent by which they are affected by their counterpart's decisions is not necessarily equal. Nevertheless, most scientific works, both theoretical and applied, are using models in which the reciprocity of interactions is a fundamental property. These frameworks, simple graphs and weighted networks, have received a lot of recent interest from economic theorists due to them providing a highly accurate, rich, and efficient description of real-life interactions. However for reasons relating to either convenience or convention, non-reciprocal relations, and their relevance to economic theory, are relatively unexplored in network literature.	

In particular, most models featuring externalities in networks such as \cite{Ballester2006} and \cite{BramoulleKranton2007} assume  reciprocal interactions. These highly influential theoretical papers opened the way for a number of applications, such as R\&D expenditure between interlinked firms \citep{Konig2019}, peer effects \citep{Blume2010}, defense expenditures \citep{SandlerHartley1995,SandlerHartley2007}, and crime \citep{Ballester2010}. Most of the applied literature continues to assume reciprocity of relations and performs equilibrium analysis. However, as it will be apparent from our results, the basis behind using the Nash equilibrium as a prediction in networks with non-reciprocal interactions is much weaker than with only reciprocal ones. Thus, predictions made by models using simple graphs or weighted networks are only justified for applications where the underlying interaction network is shown to consist only of reciprocal relations.

In this paper we extend the theoretical literature of learning in networks to include non-reciprocal relations by the use of directed networks. By doing so we intend to fill a gap in the theory literature and provide a jumping off point for a stronger connection between subsequent applied literature and real-life interaction networks. Our setting generalizes games played on weighted networks: instead of one weight, each link is defined by two distinct weights, one for each direction. The three important types of non-reciprocal relations we highlight are (1) one-directional links with one of the weights of the link being zero and the other being non-zero, e.g., an upstream city along a river affects a neighboring downstream city by polluting the river but not vice versa, (2) parasitic links with one weight being positive and the other being negative, e.g., a criminal organization engaging in the extortion of a small business gains benefits from the interaction but it comes with losses for the business, and (3) mutual but unequal benefit or harm, with both weights being positive or both being negative but they are not equal, e.g., a monopolistic seller and a competitive buyer both benefit from their economic relationship but the gains from the interaction tend to be greater for the seller than for the buyer.

One of the focuses of theoretical literature of public goods and networks is the convergence of learning processes to the Nash equilibrium. This is to provide a behavioral and an evolutionary motivation to use equilibrium as a prediction in applied settings. Under symmetric weight matrices a number of powerful convergence results are known: Stability of Nash equilibria with respect to the continuous best-response dynamic has been established by \cite{Bramoulle2014}. Convergence of the continuous best-response dynamic to some Nash equilibrium in all such games has been shown by \cite{BervoetsFaure2019}. \cite{BHPT2019} shows convergence of a class of one-sided learning processes. \cite{Bervoets2018} constructs a convergent learning process not requiring the sophistication of the best response. \cite{BHP2019} studies the impact of a farsighted agent on the process in a population of myopic players. Together, these results allow for an interpretation of the game's Nash equilibria as the results of a sequence of improvements made separately by the players.

All of the above papers assume reciprocal network interactions. Thus, they make use of more general results of games of reciprocal interactions \citep{Dubey2006,Kukushkin2005}, as well as that of generalized aggregative games \citep{Jensen2010}. The latter structure allows for the use of the theory of potential games \citep{MondererShapley1996}, specifically, best-response potential games \citep{Voorneveld2000}. The main intuition behind the existence of a differentiable best-response potential function for reciprocal interaction networks is that the potential's Hessian, a symmetric matrix, must correspond to the Jacobian of the system of best-responses, which, in this case, is equal to the network's weight matrix. In this paper, however, we show that the best-response potential structure can be exploited in important classes of networks even when the relations, as expressed by the interaction matrix, are not reciprocal.

We consider convergence to Nash equilibria under one-sided improvement dynamics taking place in discrete time. Starting from a profile of production decisions, in every time period one player receives an opportunity to change her production, while every other player remains on the previous period's level. In the next period, another player receives a revision opportunity, and so on. We consider three versions of the best-response dynamic. In increasing order of generality, these are the pure best-response dynamic, in which every revision takes the player to her current best choice given the actions of others, the best-response-approaching dynamic, in which every revision moves the player into the interval between her current action and her current best choice, and the best-response-centered dynamic, in which every revision reduces the distance between her action and her current best choice. Pure best-response dynamics as described above are widely studied. In directed network games, best-response-approaching dynamics include the naive learning dynamics introduced by \cite{Bervoets2018}, while best-response-centered dynamics are studied in \cite{BHPT2019}. These two dynamics are similar to the directional learning model \citep{SeltenStoecker1986,SeltenBuchta1998} in which players are making attempts to find their targets by adjusting towards the direction they believe the target is located. Such qualitative learning models are known to explain experimental behavior in various settings \citep{CachonCamerer1996,CasonFriedman1997,KagelLevin1999,NagelVriend1999}.

While none of these dynamics can cycle under reciprocal interactions, in directed network games cycles can emerge. Cycles indicate that convergence to the Nash equilibrium is not a universal property of learning processes. We discuss two examples: (1) directed cycle networks allow for best-response cycles as economic activity of the players flows in the opposite direction as the external effects of the network, and (2) parasite-host networks with amplifying links lead to best-response cycles as the parasite's economic activity increases with the host's activity level, while the host's economic activity decreases with that of the parasite.

Nevertheless, classes of networks exist without cycles where convergence to a Nash equilibrium can be shown, given some mild assumptions on the order of updates. In this paper we identify two such classes, networks with transitive relative importance, and games rescalable to exhibit weak influences or weak externalities.

The former class captures networks that can be transformed into symmetric networks with some appropriate rescaling of the action space, an idea raised by \cite{Bramoulle2014}. Rescaling can be understood as changing the measurement of one player's production from, e.g., euros to dollars. Rescaling does not affect the equilibrium structure or the convergence properties of the game, but it does change its nominal interaction structure as expressed by the network's weight matrix. Thus, a network with reciprocal interactions can be rescaled into non-reciprocal ones, which thus inherit its convergence properties. A network can be rescaled in such a way if and only if it satisfies the property of transitive relative importance and it does not have one-way or parasitic interactions.

The relative importance of a link for a player is measured by the relative payoff-effects between the two players. If a link is reciprocal, then the relative importance of that link for both players is unity. A link with a larger weight for one player and a small weight for the other is relatively more important to the former and, inversely, not so important for the latter. Transitive relative importance restricts the network through these values. For instance, if the link $\{i,j\}$ is more important to $i$ than to $j$, and if the link $\{j,k\}$ has equal importance to both participants, then the link $\{i,k\}$ must be more important to $i$ than to $k$. This property presumes a common hierarchy of players with important players whose production matters greatly in relative terms for all individuals and less important players whose production matters little.

This property is closely connected to Kolmogorov's reversibility criterion for Markov chains (see e.g.~\cite{Kelly2011}), as well as transitive matrices \citep{Farkas1999} a property applied in pairwise comparison matrices \citep{Bozoki2010} and Analytic Hierarchy Processes \citep{Saaty1988}. We show that these networks and only these can be rescaled into symmetric ones, and these are the only ones that have a quadratic best-response potential function.

The second class of networks with convergent best-response dynamics are those that are rescalable to exhibit weak influences or weak externalities. A player is influenced weakly by her opponents if the total effects of a unit change in all of her opponents' actions on her are smaller than the effect of a unit change in her own action. These networks are characterized by row diagonally dominant weight matrices. In social networks, this property can be interpreted as a form of individualism. On the other hand, weak external effects are characterized by column diagonally dominant weight matrices, meaning a unit change in any player's action has a larger effect on herself than on all the other players combined. In economics, small level of externalities is a characteristic of efficient markets.

In non-cooperative game theory, weakness of externalities is known be sufficient for uniqueness of the Nash equilibrium as well as convergence to the unique Nash equilibrium under best-response dynamics \citep{GabayMoulin1980,Moulin1986}. In networks, weak influences are studied by \cite{PariseOzdaglar2019} while both types of diagonal dominance are studied by \cite{Scutari2014}. Since in our model, network weights are allowed to be both negative and positive, weak externalities is a separate condition from small network effects \citep{Bramoulle2014,Belhaj2014}. Under small network effects, given an equilibrium production profile, a tremor in a player's production decision is dampened by the network such that the system returns to the original equilibrium under best-response dynamics. We show that small network effects in absolute value, that is, the spectral radius of the absolute value of the weight matrix being less than one is sufficient and necessary for rescalability into networks with weak influences and those with weak externalities. We thus fully characterize the class of networks for which uniqueness of equilibrium and global convergence of best-response dynamics can be shown by either type of diagonal dominance. Moreover, we show that this class of games also has a potential structure.

Additionally, this class also includes the set of directed acyclic networks. In this subclass every interaction is one-way and there are no directed cycles. The latter property imposes a hierarchy on the players with respect to the one-way interactions; players on the highest level of the hierarchy are not affected by any other player, players on intermediate levels are affected by those on higher levels but not by those on lower levels, while players on the lowest level have no effect on any other player. The most direct application of this class of games is pollution management of cities along a river \citep{NiWang2007} or a river network \citep{DongNiWang2012}, but the same model is used in studying the conservation of common-pool resources \citep{RichefortPoint2010} and games with permission structures \citep{vanderBrink2018}.

Overall, our results have a number of implications with respect to convergence in directed network games. A negative finding is that, in networks with directed cycles and parasite-host interactions with amplifying weights, the interpretation of the Nash equilibrium as an outcome of a series of decentralized improvements by the players is questionable as the convergence of simple learning processes is not assured. We complement this with a set of positive results by identifying and characterizing interesting classes of networks where convergence is assured. Our two sets of positive results uncover insights into the relationship between reciprocity of network interactions, the spectral properties of the network, and the games' potential structure as well as generalize the powerful results achieved for the case of reciprocal interactions.

Our paper is organized as follows: Section \ref{sec: model} presents our two main concepts, directed network games and best-response dynamics. In Section \ref{sec: cycle} we present the two simple networks with best-response cycles. Section \ref{sec: transitive} contains our characterization and convergence results for networks that can be rescaled into symmetric networks. Section \ref{sec: weak} contains the same sets of results for networks rescalable to games with weak influences or weak externalities, and shows that this class includes directed acyclic networks. Section \ref{sec: conclusion} concludes.

\section{The model}\label{sec: model}

Let $I = \{1,\ldots,n\}$ be the set of players. For $i\in I$ and upper bounds $\xh_i>0$ the set $X_i = [0, \xh_i]$ is called the action set of player $i$, $X = \prod_{i\in I} X_i$ is called the set of action profiles. We let $x_i\in X_i$ denote the action taken by player $i$ while $x_{-i}\in X \setminus X_i$ denotes the truncated action profile of all players except player $i$, and $x = (x_i)_{i\in I}$ denotes the action profile of all players.

The formal definition of directed network games used in this paper is as follows.

\begin{definition}\label{def: dng}
The tuple $G = (I, X, (\pi_i)_{i\in I})$ is called a \textit{directed network game} with payoff functions $\pi_i\colon X \to\mathbb{R}$ given by
\begin{align}
\label{payoff}
\pi_i(x)=f_i\left(\sum_{j\in I}w_{ij}x_j\right)-c_ix_i,
\end{align}
where $f_i: \mathbb{R}\to\mathbb{R}$ is twice differentiable, $f_i'>0$, $f_i''<0$, $w_{ij}\in\mathbb{R}$, $w_{ii}=1$, and $c_i\geq 0$ for every $i, j\in I$.
\end{definition}

\noindent The interpretation is the following. Each player produces a specialized good with linear production technology, incurring costs $c_i$ for every unit of the good produced. Players derive benefits from the consumption of their own goods and they are affected by their opponents' production decisions. Player $i$'s enjoyment of player $j$'s good is represented by the weight $w_{ij}\in\mathbb{R}$. We normalize the interaction parameter of each player $i$ with herself, $w_{ii}$, to $1$. The overall benefits of player $i$ are given by the benefit function $f_i$ over the weighted sum of her and her opponents' goods. Crucially, we do not impose reciprocal relations, meaning that $w_{ij}\ne w_{ji}$ may hold, so the weight matrix $(w_{ij})_{i,j\in I} = W$ might not be symmetric.

Since the benefit functions $f_i$ are increasing and concave and the cost parameters $c_i$ are positive, for every $x_{-i}$ there is a unique value of $x_i$ that maximizes $\pi_i(x)$. Let the target values $t_i$ be implicitly defined by $f'_i(t_i)=c_i$, i.e. the value player $i$ would produce if all others produce $0$ and let $t=(t_i)_{i\in I}$ denote the vector of targets. We make the simplifying assumption that every player is able to produce her target amount of the good.

\begin{assumption}\label{ass: interior}
We assume that $f_i$ are given such that for every $i\in I$ we have $t_i\in (0, \xh_i)$.
\end{assumption}

Note that if $w_{ij}>0$ and $w_{ji}>0$, then the goods of players $i$ and $j$ are strategic substitutes. If $w_{ij}<0$ and $w_{ji}<0$ then their goods are strategic complements. If $w_{ij}>0$ and $w_{ji}<0$, then we say that players $i$ and $j$ share a parasitic link. If $w_{ij}=0$, then player $i$ is not directly affected by player $j$'s production decision.

For player $i\in I$ and $x\in X$, $b_i(x)=\argmax_{x_i\in X_i}\pi_i(x)$ denotes player $i$'s best-response function.

\begin{lemma}\label{le: br3}
Let $G = (I, X, (\pi_i)_{i\in I})$ be a weighted network game. Then, for every $i\in I$ and $x \in X$ the best response functions are the following:
\begin{align}
\label{best response}
b_i(x) =
\begin{cases}
0 & \text{if } t_i - \sum_{j\in I\setminus \{i\} } w_{ij}x_j < 0, \\
t_i - \sum_{j\in I\setminus \{i\} } w_{ij}x_j & \text{if } t_i - \sum_{j\in I\setminus \{i\} } w_{ij}x_j \in [0, \xh_i], \\
\xh_i & \text{if } t_i - \sum_{j\in I\setminus \{i\} } w_{ij}x_j > \xh_i.
\end{cases}
\end{align}
\end{lemma}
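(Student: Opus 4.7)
My plan is to treat the maximization problem $\max_{x_i \in [0,\overline{x}_i]} \pi_i(x_i, x_{-i})$ as a one-dimensional concave optimization on a closed interval: I will solve the unconstrained first-order condition explicitly using the definition of $t_i$, and then project the unconstrained optimum onto $[0,\overline{x}_i]$ using strict concavity. The three cases in the displayed formula correspond exactly to the three possible positions of the unconstrained optimum relative to the interval $[0,\overline{x}_i]$.

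First, I would exploit the normalization $w_{ii}=1$ to rewrite the payoff as $\pi_i(x) = f_i\bigl(x_i + \sum_{j \neq i} w_{ij}x_j\bigr) - c_i x_i$. Differentiating in $x_i$ gives $\partial \pi_i/\partial x_i = f_i'\bigl(x_i + \sum_{j \neq i}w_{ij}x_j\bigr) - c_i$ and $\partial^2 \pi_i/\partial x_i^2 = f_i''(\cdot) < 0$. The second derivative being negative establishes strict concavity of $\pi_i(\cdot,x_{-i})$ on $\mathbb{R}$, which guarantees a unique constrained maximizer on $[0,\overline{x}_i]$ and justifies both the first-order analysis and the monotonicity-based boundary arguments below.

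Next, I would solve the unconstrained first-order condition. Setting $\partial \pi_i/\partial x_i = 0$ gives $f_i'\bigl(x_i + \sum_{j \neq i}w_{ij}x_j\bigr) = c_i$. Since $f_i''<0$, the derivative $f_i'$ is strictly decreasing and hence injective, so combined with the defining equation $f_i'(t_i) = c_i$ this yields the unique real solution $x_i + \sum_{j \neq i} w_{ij}x_j = t_i$, i.e.\ the unconstrained optimum is $x_i^{\ast} = t_i - \sum_{j \neq i} w_{ij}x_j$.

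Finally, I would project onto $[0,\overline{x}_i]$ using strict concavity. If $x_i^{\ast} \in [0,\overline{x}_i]$, then $x_i^{\ast}$ is itself the constrained maximizer, giving the middle case of \eqref{best response}. If $x_i^{\ast} < 0$, then by strict concavity $\partial \pi_i/\partial x_i$ is negative throughout $[0,\overline{x}_i]$, so $\pi_i(\cdot,x_{-i})$ is strictly decreasing on the feasible set and $b_i(x) = 0$; symmetrically, if $x_i^{\ast} > \overline{x}_i$, then $\partial \pi_i/\partial x_i > 0$ on $[0,\overline{x}_i]$ and $b_i(x) = \overline{x}_i$. There is no substantive obstacle here: the argument is a standard concave-maximization-on-a-compact-interval exercise, and Assumption \ref{ass: interior} is not actually needed to derive the formula itself (it is only invoked elsewhere to ensure each target $t_i$ lies in the interior of $X_i$).
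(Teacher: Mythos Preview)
Your proposal is correct and follows essentially the same approach as the paper's own proof: compute the unconstrained first-order condition, identify the unconstrained optimum as $t_i - \sum_{j\neq i} w_{ij}x_j$ via $f_i'(t_i)=c_i$, verify concavity through the second-order condition, and then use the resulting monotonicity of $\pi_i(\cdot,x_{-i})$ on either side of the unconstrained optimum to handle the boundary cases. Your remark that Assumption~\ref{ass: interior} is not needed for the formula itself is accurate and a nice observation, though the paper does not comment on this.
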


\begin{proof}
First, we calculate the unconstrained best response for player $i$. The first-order condition is
\begin{align*}
\frac{\partial \pi_i(x)}{\partial x} = f_i'\left(\sum_{j\in I}w_{ij}x_j\right)-c_i = 0.
\end{align*}
\noindent Combining with $f'_i(t_i) = c_i$, we get that the unconstrained best response, $\bt_i(x)$ is
\begin{align}
\label{eq: unconstrained}
\bt_i(x) = t_i - \sum_{j\in I\setminus \{i\} } w_{ij}x_j.
\end{align}
\noindent The second-order condition is
\begin{align*}
\frac{\partial^2 \pi_i(x)}{\partial x^2} = f_i''\left(\sum_{j\in I}w_{ij}x_j\right) < 0,
\end{align*}
\noindent hence $\bt_i(x)$ is indeed maximizing the payoff. This means that for every $ x_i > \bt_i(x)$, a marginal increase of $x_i$ decreases $\pi_i(x)$, while for every $ x_i < \bt_i(x)$, a marginal increase of $x_i$ increases $\pi_i(x)$. Therefore, if $\bt_i(x) \in [0, \xh_i]$, then $b_i(x) = \bt_i(x)$. If $\bt_i(x) < 0$, then $b_i(x) = 0$, as choosing a larger $x_i$ would decrease the payoff. Similarly, if $\bt_i(x) > \xh_i$, then $b_i(x) = \xh_i$.
\end{proof}

\begin{proposition}\label{pro: nash}
Every directed network game has at least one Nash equilibrium.
\end{proposition}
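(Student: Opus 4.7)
The plan is to apply Brouwer's fixed-point theorem to the joint best-response map. First I would verify that the action space $X = \prod_{i\in I} X_i$ is a nonempty, compact, and convex subset of $\mathbb{R}^n$: each $X_i = [0,\bar{x}_i]$ is a compact interval with $\bar{x}_i > 0$, and a finite product of such intervals inherits these properties.

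Next I would argue that the best-response map $b\colon X \to X$, defined componentwise by $b_i$ from Lemma \ref{le: br3}, is well-defined and continuous. Well-definedness follows because $\pi_i(\cdot, x_{-i})$ is strictly concave in $x_i$ (since $f_i'' < 0$ and the argument of $f_i$ is affine in $x_i$), so the maximizer on the compact interval $X_i$ is unique. Continuity follows directly from the explicit formula in Lemma \ref{le: br3}: the unconstrained best response $\tilde{b}_i(x) = t_i - \sum_{j\ne i} w_{ij} x_j$ is affine, hence continuous, and $b_i$ is its composition with the continuous truncation onto $[0, \bar{x}_i]$, i.e. $b_i(x) = \max\{0, \min\{\bar{x}_i, \tilde{b}_i(x)\}\}$. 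Therefore $b = (b_i)_{i\in I}\colon X \to X$ is a continuous self-map on a nonempty, compact, convex subset of $\mathbb{R}^n$.

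By Brouwer's fixed-point theorem there exists $x^* \in X$ with $b(x^*) = x^*$. Since $b_i(x^*) = x_i^*$ means $x_i^*$ maximizes $\pi_i(\cdot, x_{-i}^*)$ over $X_i$ for every $i\in I$, the profile $x^*$ is a Nash equilibrium of $G$.

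There is no real obstacle here: the only thing to be careful about is citing the right fixed-point theorem (Brouwer suffices because strict concavity makes $b$ single-valued; Kakutani would also work but is unnecessary). One could alternatively invoke the standard Debreu--Fan--Glicksberg existence theorem, noting that $\pi_i$ is continuous in $x$ and quasi-concave in $x_i$, but the Brouwer argument is the most direct given that Lemma \ref{le: br3} already delivers a continuous best-response function.
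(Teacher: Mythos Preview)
Your proposal is correct and matches the paper's approach exactly: the paper simply notes that \cite{Bramoulle2014}'s Brouwer fixed-point argument for the symmetric case applies verbatim in the directed case, and your write-up spells out precisely that argument.
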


\noindent \cite{Bramoulle2014}'s analogue result using Brouwer's fixed-point theorem for a positive and symmetric weight matrix is directly applicable in the directed network case. Let the set of Nash equilibria be denoted by $X^\ast$.



Lemma \ref{le: br3} and Proposition \ref{pro: nash} have close analogues in existing models with symmetric interaction, but, as we will show shortly, results concerning the cycling and convergence of simple learning processes do not hold in general in the directed network case.

We now formally introduce the learning processes of our paper, the best-response dynamic and two extensions.

\begin{definition}\label{def: frequently}
The sequence of action profiles $(x^k)_{k\in \mathbb{N}}$ is a \textit{one-sided dynamic} if
\begin{itemize}
\item for every $k\in \mathbb{N}$ there exists an $i^k\in I$ such that $x_{-i^k}^k = x_{-i^k}^{k+1}$, and
\item for every $i\in I$ the set $K^i=\{k\in\mathbb{N}\colon i^k=i\}$ is infinite.
\end{itemize}
\end{definition}

\begin{definition}[Best-response dynamics]\label{def: br dynamic}
The one-sided dynamic $(x^k)_{k\in \mathbb{N}}$ is a
\begin{itemize}
\item \textit{best-response dynamic} (BRD), if we have  $x_{i^k}^{k+1} = b_{i^k}(x^k)$.
\item \textit{best-response-approaching dynamic} (BRAD) with approach parameter $0\leq\beta<1$, if $|x_{i^k}^{k+1}-b_{i^k}(x^k)|\leq \beta|x_{i^k}^k-b_{i^k}(x^k)|$ and if $x_{i^k}^{k+1}\neq b_{i^k}(x^k)$, then $\sgn(x_{i^k}^{k+1}-b_{i^k}(x^k))=\sgn(x_{i^k}^{k}-b_{i^k}(x^k))$.
\item \textit{best-response-centered dynamic} (BRCD) with centering parameter $0\leq \alpha<1$, if $|x_{i^k}^{k+1}-b_{i^k}(x^k)| \leq \alpha|x_{i^k}^k-b_{i^k}(x^k)|$,
\end{itemize}
for every $k\in\mathbb{N}$.
\end{definition}

\noindent In a one-sided dynamic exactly one player changes her action in every time period. In a BRD, every revision takes the updating player to her best response, in a BRAD, players move closer to their best responses without overshooting it, while in a BRCD, players move closer to their best responses and are allowed to overshoot. The approach and centering parameters of a BRAD and a BRCD, respectively, indicate the maximum fraction to which the distances are allowed to decrease. These processes allow payoff-maximizing players to make mistakes, their ability of reaching the best-response is captured by the two parameters with lower values indicating a higher level of accuracy. It is clear that every BRD is a BRAD and every BRAD is a BRCD.

In all cases, through Definition \ref{def: frequently}, we restrict our attention to dynamics where every player revises infinitely many times. This brings the property that every convergent dynamic will converge to a Nash equilibrium, and no player can get stuck playing a suboptimal action indefinitely by not having the opportunity to revise.

\cite{Bramoulle2014} and \cite{BervoetsFaure2019} consider the BRD in continuous time. \cite{PariseOzdaglar2019}, in addition to the continuous-time dynamic, also considers discrete-time BRD, both simultaneous and sequential updating in a fixed order of players. Our BRD process is more general than the latter as revision opportunities may arrive in any order. \cite{BHPT2019} considers both the BRD and the BRCD as above.

Finally in this section we define one of the main concepts used in this paper, best-response potential games.

\begin{definition}[\cite{Voorneveld2000}]\label{def: br potential}
A game $G = (I, X, (\pi_i)_{i\in I})$ is a \textit{best-response potential game}, if there exists a \textit{best-response potential function} $\phi: X \to\mathbb{R}$ such that for every $i \in I$, and every $x_{-i} \in X_{-i}$ it holds that
\begin{align*}
\argmax_{x_i\in X_i} \pi_i(x) = \argmax_{x_i\in X_i} \phi(x).
\end{align*}

\end{definition}

\noindent Definition \ref{def: br potential} states that $G$ is a best-response potential game if the best-response behavior of all players can be characterized by a single real-valued function $\phi$, called the best-response potential.

\section{Networks with best-response cycles}\label{sec: cycle}
In this section we discuss the cycling of best-response dynamics. The non-existence of best-response cycles is a necessary but not sufficient condition of the convergence of best-response dynamics \citep{Kukushkin2015}, which is true in networks with reciprocal interactions. In this section we show that directed cycle networks and parasitic relations with amplifying links lead to cycling and thus hinder any general convergence results.

We begin with a formal definition of cycles.

\begin{definition}[Cycles]\label{def: br cycle}
A sequence $(x^k)_{k\in \mathbb{N}}$ has a \textit{cycle} if there exist three time periods, $k<k'<k''$ such that $x^k = x^{k''}$, but $x^k\neq x^{k'}$.
\end{definition}

In words, a process has a cycle if it non-trivially revisits an action profile in two different time periods.

\begin{example}[Directed cycle network]\label{ex: circle2}

Consider the directed cycle network, with $I = \{1,2,3\}$, $X_i=[0,1]$ for $i\in I$, and the weight matrix
\begin{align*}
W = \begin{pmatrix}
1 & 0 & 1 \\ 1 & 1 & 0 \\ 0 & 1 & 1
\end{pmatrix},
\end{align*}
\noindent representing a directed cycle with three players. Let $f_i(x) = \log(1+x)$ and $c_i(t_i) = 1/(1+t_i)$ for $i\in I$, so player $i$'s payoff is 
\begin{align*}
\pi_i(x) = \log\left(1+\sum_{j\in I}w_{ij}x_j\right)-\frac{1}{1+t_i}x_i.
\end{align*}
As $f'(t_i) = c_i(t_i)$, $t$ is indeed the vector of targets. Now fix $t = (1, 1, 1)^\top$.

This game has a single Nash equilibrium, $x^\ast=(0.5,0.5,0.5)^\top$. Consider the BRD with the initial action profile $x = (1,0,0)^\top$. Let player 3 receive the first revision opportunity, followed by player 1.

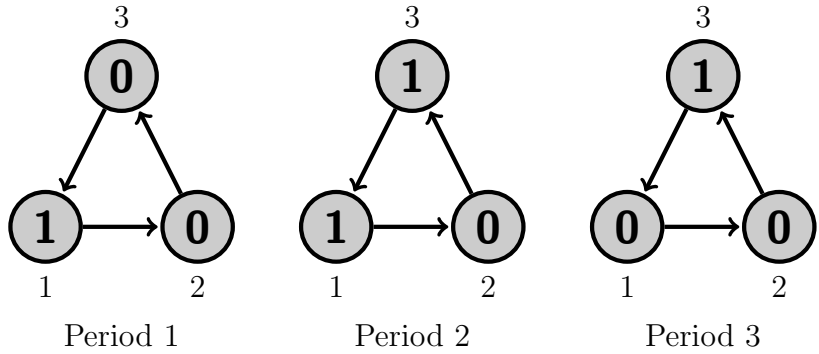
\begin{figure}[!ht]
\centering
\begin{tabular}{ccccc}
\begin{tikzpicture}[shorten >=1pt, auto, node distance=3cm, ultra thick]
    \tikzstyle{node_style} = [circle,draw=black,fill=black!20!,font=\sffamily\Large\bfseries]
    \tikzstyle{edge_style} = [draw=black, line width=2, ultra thick]
\node[node_style, label=below:{$1$}] (v1) at (-1,0) {1};
\node[node_style, label=below:{$2$}] (v2) at (1,0) {0};
\node[node_style, label=above:{$3$}] (v3) at (0,2) {0};
\draw[->]  (v1) -- (v2);
\draw[->]  (v2) -- (v3);
\draw[->]  (v3) -- (v1);
\end{tikzpicture}
&&
\begin{tikzpicture}[shorten >=1pt, auto, node distance=3cm, ultra thick]
    \tikzstyle{node_style} = [circle,draw=black,fill=black!20!,font=\sffamily\Large\bfseries]
    \tikzstyle{edge_style} = [draw=black, line width=2, ultra thick]
\node[node_style, label=below:{$1$}] (v1) at (-1,0) {1};
\node[node_style, label=below:{$2$}] (v2) at (1,0) {0};
\node[node_style, label=above:{$3$}] (v3) at (0,2) {1};
\draw[->]  (v1) -- (v2);
\draw[->]  (v2) -- (v3);
\draw[->]  (v3) -- (v1);
\end{tikzpicture}
&&
\begin{tikzpicture}[shorten >=1pt, auto, node distance=3cm, ultra thick]
    \tikzstyle{node_style} = [circle,draw=black,fill=black!20!,font=\sffamily\Large\bfseries]
    \tikzstyle{edge_style} = [draw=black, line width=2, ultra thick]
\node[node_style, label=below:{$1$}] (v1) at (-1,0) {0};
\node[node_style, label=below:{$2$}] (v2) at (1,0) {0};
\node[node_style, label=above:{$3$}] (v3) at (0,2) {1};
\draw[->]  (v1) -- (v2);
\draw[->]  (v2) -- (v3);
\draw[->]  (v3) -- (v1);
\end{tikzpicture}
\\
Period 1&&Period 2&&Period 3
\end{tabular}
\caption{Economic activity shifts in the opposite direction of externalities under the BRD.}
\label{table: circle2}
\end{figure}
Figure \ref{table: circle2} shows how player 1's initial production moves along its incoming link to player 3. It is easy to see that by allowing player 2 to revise next followed by player 3, activity shifts further on to player 2. Moving along one step further completes the best-response cycle to $x=(1,0,0)^\top$.

The directed cycle of this example thus permits a best-response cycle in which the players' activity moves along the cycle in the opposite direction as the players' externalities. It is easy to see that this property remains true for directed cycles of $n$ players. If $n$ is even, it may be possible to reach an equilibrium in which every second player produces $1$ and the others produces $0$ for some order of revisions, but no $0$-$1$ equilibrium exists if $n$ is odd. Therefore, in the latter case, if every player begins at a production level of either $0$ or $1$, then the only two possible production levels in any best-response path are also $0$ and $1$, and hence, convergence to the Nash equilibrium is impossible. For $n=3$, the game will repeat the cycle in Figure \ref{table: circle2} ad infinitum.
\end{example}

Next, we consider a parasitic link between two players.

\begin{example}[Parasitism]\label{ex: parasite}
Let $I=\{1,2\}$, $f_i(x) = \log(1+x)$, $c_i(t_i)=1/(1+t_i)$ for $i\in I$, $t_1,t_2=1$, and let
\begin{align*}
W = \begin{pmatrix}
1 & -2\\ 0.5 & 1
\end{pmatrix}.
\end{align*}
Under the BRD, starting from the action profile $x=(1,0)^\top$, if both players revise in turns, the game has a best-response cycle of length four.

\begin{figure}[h!]
\centering
\begin{tabular}{cc}
\begin{tikzpicture}[shorten >=1pt, auto, node distance=3cm, ultra thick]
    \tikzstyle{node_style} = [circle,draw=black,fill=black!20!,font=\sffamily\Large\bfseries,minimum size=1cm]
    \tikzstyle{edge_style} = [draw=black, line width=2, ultra thick]
\node[node_style, label=below:{$1$}] (v1) at (-2,0) {1};
\node[node_style, label=below:{$2$}] (v2) at (2,0) {0};
\draw[->]  (-1.4,0.1) -- (1.4,0.1);
\draw[->]  (1.4,-0.1) -- (-1.4,-0.1);
\node[label=below:{$0.5$}] (v3) at (0,0.8) {};
\node[label=above:{$-2$}] (v4) at (0,-0.8) {};
\end{tikzpicture}
&
\begin{tikzpicture}[shorten >=1pt, auto, node distance=3cm, ultra thick]
    \tikzstyle{node_style} = [circle,draw=black,fill=black!20!,font=\sffamily\Large\bfseries,minimum size=1cm]
    \tikzstyle{edge_style} = [draw=black, line width=2, ultra thick]
\node[node_style, label=below:{$1$}] (v1) at (-2,0) {1};
\node[node_style, label=below:{$2$}] (v2) at (2,0) {.5};
\draw[->]  (-1.4,0.1) -- (1.4,0.1);
\draw[->]  (1.4,-0.1) -- (-1.4,-0.1);
\node[label=below:{$0.5$}] (v3) at (0,0.8) {};
\node[label=above:{$-2$}] (v4) at (0,-0.8) {};
\end{tikzpicture}
\\
Period 1: Self-sustaining host. & Period 2: Activation of parasite.
\\
\begin{tikzpicture}[shorten >=1pt, auto, node distance=3cm, ultra thick]
    \tikzstyle{node_style} = [circle,draw=black,fill=black!20!,font=\sffamily\Large\bfseries,minimum size=1cm]
    \tikzstyle{edge_style} = [draw=black, line width=2, ultra thick]
\node[node_style, label=below:{$1$}] (v1) at (-2,0) {2};
\node[node_style, label=below:{$2$}] (v2) at (2,0) {.5};
\draw[->]  (-1.4,0.1) -- (1.4,0.1);
\draw[->]  (1.4,-0.1) -- (-1.4,-0.1);
\node[label=below:{$0.5$}] (v3) at (0,0.8) {};
\node[label=above:{$-2$}] (v4) at (0,-0.8) {};
\end{tikzpicture}
&
\begin{tikzpicture}[shorten >=1pt, auto, node distance=3cm, ultra thick]
    \tikzstyle{node_style} = [circle,draw=black,fill=black!20!,font=\sffamily\Large\bfseries,minimum size=1cm]
    \tikzstyle{edge_style} = [draw=black, line width=2, ultra thick]
\node[node_style, label=below:{$1$}] (v1) at (-2,0) {2};
\node[node_style, label=below:{$2$}] (v2) at (2,0) {0};
\draw[->]  (-1.4,0.1) -- (1.4,0.1);
\draw[->]  (1.4,-0.1) -- (-1.4,-0.1);
\node[label=below:{$0.5$}] (v3) at (0,0.8) {};
\node[label=above:{$-2$}] (v4) at (0,-0.8) {};
\end{tikzpicture}
\\
Period 3: Increased host activity. & Period 4: Passive, free-riding parasite.
\end{tabular}
\caption{The best-response cycle of Example \ref{ex: parasite}.}
\label{fig: parasite}
\end{figure}
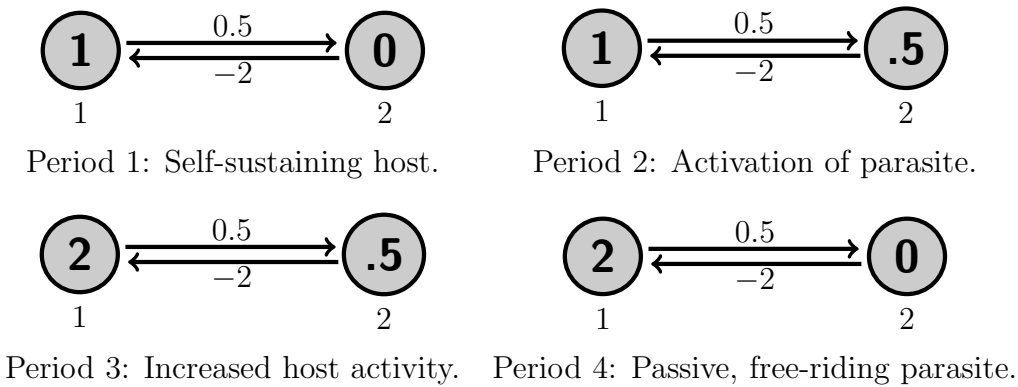

In this example, player $1$ is called the host and player $2$ is called the parasite. A self-sustaining host is engaged by a parasite. The host responds by increasing her activity to offset the parasite's negative effects. The parasite's benefits from the host are large enough that it ceases production entirely and free-rides on the host. The host is then able to return to the self-sustaining stage, completing the cycle. This process is shown in Figure \ref{fig: parasite}.

The same pattern of parasite-host interaction can be replicated by different calibration of parameters. A parasitic network requires $w_{12}<0$ and $w_{21}>0$. If the network weights and targets also satisfy
\begin{enumerate}[label=(\roman*)]
\item $\overline{x}_1\geq t_1-w_{12}(t_2-w_{21}t_1)$ (host has a large enough capacity to feed the parasite),
\item $\overline{x}_2\geq t_2-w_{21} t_1$ (parasite has a large enough capacity to cause a nuisance to a self-sustaining host),
\item $t_2-w_{21}t_1>0$ (self-sustaining host does not satisfy the parasite),
\item $|w_{12}w_{21}|\geq 1$ (the link is amplifying),
\end{enumerate}
then the best-response cycle qualitatively equivalent to that of Figure \ref{fig: parasite} is achieved by sequential updates by the players.
\end{example}

Examples \ref{ex: circle2} and \ref{ex: parasite} together indicate that directed cycles and parasitic relations lead to the cycling of BRDs, and hence that of BRADs and BRCDs.

\section{Transitive relative importance}\label{sec: transitive}
As shown in the previous section, allowing for non-reciprocal interactions in network games changes their convergence properties under BRDs. However, there are classes of directed network games where convergence to the game's Nash equilibrium can still be shown.

It is well known in the literature that convergence of BRDs in networks of reciprocal interactions can be shown by exploiting the game's potential structure. In section VI, \cite{Bramoulle2014} raises the idea that, by an appropriate rescaling of the action space, this method may be extended for some class of directed networks as well. In this section we identify and characterize this class of networks as those with transitive weight matrices. Additionally, we show that, of all directed network games, games played on these and only these networks have quadratic best-response potentials.

We first formally restate the convergence result in the symmetric case for future use. We begin by defining regular updates by players.

\begin{definition}
\label{def: regularly}
We say that players \textit{update regularly} in a one-sided dynamic $(x^k)_{k\in \mathbb{N}}$, if there exists a $K>0$ such that for every $i\in I$ and $k\in\mathbb{N}$ there exists a $k'$ such that $k< k' \leq k+K$ and $i^{k'}=i$.
\end{definition}

\noindent Players updating regularly ensures that no player's frequency of receiving revision opportunities approaches zero.

\begin{proposition}\label{pro: symmetric}
Let $W$ be given such that for every $i,j\in I$ we have $w_{ij}=w_{ji}$ and let $t$ be such that $|X^\ast|<\infty$. Then, every BRD and BRCD $(x^k)_{k\in \mathbb{N}}$ in which players update regularly converges to a Nash equilibrium.
\end{proposition}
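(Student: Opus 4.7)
The plan is to construct an explicit quadratic best-response potential and then drive the convergence through a LaSalle-type argument based on the monotonicity of this potential. Specifically, I would take
\begin{equation*}
\phi(x) = -\tfrac{1}{2}\sum_{i,j\in I} w_{ij}\, x_i x_j + \sum_{i\in I} t_i\, x_i.
\end{equation*}
Because $W$ is symmetric and $w_{ii}=1$, a direct computation gives $\partial \phi / \partial x_i = t_i - \sum_{j\in I} w_{ij} x_j$, and $\phi$ is strictly concave in $x_i$. The first-order condition therefore coincides with that of $\pi_i$ (cf.~\eqref{eq: unconstrained}), and the constrained maximizer of $\phi(\cdot,x_{-i})$ over $X_i$ equals the best response $b_i(x)$ from Lemma~\ref{le: br3}; hence $\phi$ is a best-response potential.

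The next step is to show that each non-trivial update strictly increases $\phi$, and moreover by a controlled amount. Completing the square using $w_{i^k i^k}=1$ and the symmetry of $W$ yields the exact identity
\begin{equation*}
\phi(x^{k+1}) - \phi(x^k) = \tfrac{1}{2}\bigl[(x^k_{i^k} - \bt_{i^k}(x^k))^2 - (x^{k+1}_{i^k} - \bt_{i^k}(x^k))^2\bigr].
\end{equation*}
A short case analysis, separating $\bt_{i^k}(x^k)\in[0,\xh_{i^k}]$ from the boundary cases $\bt_{i^k}(x^k)<0$ and $\bt_{i^k}(x^k)>\xh_{i^k}$, and using the centering inequality $|x^{k+1}_{i^k}-b_{i^k}(x^k)|\leq \alpha |x^k_{i^k}-b_{i^k}(x^k)|$ (with $\alpha=0$ for the BRD), then yields the key bound
\begin{equation*}
\phi(x^{k+1})-\phi(x^k)\;\geq\;\tfrac{1-\alpha^2}{2}\bigl(x^k_{i^k}-b_{i^k}(x^k)\bigr)^2.
\end{equation*}
Since $\phi$ is continuous on the compact set $X$ it is bounded, so summing these inequalities gives $\sum_k (x^k_{i^k}-b_{i^k}(x^k))^2<\infty$. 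In particular $x^k_{i^k}-b_{i^k}(x^k)\to 0$, and since $|x^{k+1}-x^k|\leq (1+\alpha)|x^k_{i^k}-b_{i^k}(x^k)|$ coordinatewise, also $\|x^{k+1}-x^k\|\to 0$.

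The final step is a standard limit-point argument exploiting regularity and $|X^\ast|<\infty$. If $\bar x$ is any accumulation point along $x^{k_m}\to\bar x$, then for each player $i$ Definition~\ref{def: regularly} supplies a time $k'_m\in(k_m,k_m+K]$ with $i^{k'_m}=i$; since $\|x^{k+1}-x^k\|\to 0$, also $x^{k'_m}\to\bar x$. Because $x^{k'_m}_i-b_i(x^{k'_m})\to 0$ and $b_i$ is continuous, $\bar x_i = b_i(\bar x)$, and this holds for every $i$, so $\bar x\in X^\ast$. The set of accumulation points is connected (this is the standard consequence of $\|x^{k+1}-x^k\|\to 0$ on a compact set), and being a connected subset of the finite set $X^\ast$ it reduces to a single point; convergence of $(x^k)$ follows.

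The main obstacle I anticipate is the potential-increase inequality at the boundary: when $\bt_{i^k}(x^k)$ lies outside $[0,\xh_{i^k}]$, the clean completion-of-the-square bound in terms of $|x^{k+1}_{i^k}-\bt_{i^k}(x^k)|$ has to be translated into one in terms of $|x^{k+1}_{i^k}-b_{i^k}(x^k)|$, which is what BRCD controls. The reconciliation relies on $b_{i^k}$ being the projection of $\bt_{i^k}$ onto $X_{i^k}$ together with the fact that $x^k_{i^k}$ and $x^{k+1}_{i^k}$ lie on the same side of $\bt_{i^k}$ as this projection, and this is the only place in the proof where genuine care is needed.
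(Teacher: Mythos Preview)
Your proposal is correct and follows essentially the same route the paper sketches: the paper does not prove Proposition~\ref{pro: symmetric} in full but refers to \cite{BHPT2019}, Theorem~5.3, giving only the intuition that $x^\top t-\tfrac{1}{2}x^\top Wx$ is a best-response potential whose monotonicity under updates, together with regular updating and finiteness of $X^\ast$, forces convergence. Your argument is a clean, self-contained realization of exactly this outline, including the quantitative potential-increase bound and the connected-limit-set step; the boundary case you flag as the delicate point is handled correctly (the key is that when $\bt_{i^k}<0$ one has $x^k_{i^k},x^{k+1}_{i^k}\geq 0>\bt_{i^k}$ and $x^{k+1}_{i^k}\leq\alpha x^k_{i^k}$, whence $(x^k_{i^k}-\bt_{i^k})^2-(x^{k+1}_{i^k}-\bt_{i^k})^2\geq (x^k_{i^k})^2-(x^{k+1}_{i^k})^2\geq(1-\alpha^2)(x^k_{i^k})^2$, and symmetrically at the upper boundary).
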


\noindent For the full proof of Proposition \ref{pro: symmetric} see Theorem~5.3 in \cite{BHPT2019}. By Lemma~3.2 in \cite{BHPT2019}, for every network, the set of target vectors which result in infinitely many Nash equilibria is small both in the measure theoretic sense and the topologial sense, so the finiteness condition is not restrictive. The main intuition for the BRD part is that $w_{ij}=w_{ji}$ implies that the function

\[x^\top t-\frac{1}{2}x^\top W x\]
\noindent serves as a best-response potential. Every player's update will weakly increase the value of the best-response potential, which is bounded as the action space is bounded. Regular updating ensures that, in time, every player will be close to her current best-response, while finite number of Nash equilibria ensures that convergence of $\phi$ translates to convergence to an isolated peak of the potential value landscape, corresponding to a Nash equilibrium. Additionally, \cite{BHPT2019} show that the finiteness of the equilibrium set is a generic property of undirected network games.

Take a vector $a\in \mathbb{R}^n$, called a scaling vector, such that $a>0$, i.e., for every $i\in I$ we have $a_i> 0$ and let $y_i=a_ix_i$ and $\overline{y}_i=a_i\overline{x}_i$. Furthermore, let $Y_i=[0,\overline{y}_i]$ and $Y=\prod_{i\in I}Y_i$. It is clear that any BRD, BRAD, or BRCD in the game played on $X$ is a BRD, BRAD, or BRCD, respectively, in the game played on $Y$ with the same approach or centering parameters in the cases of BRAD and BRCD. Similarly, the convergence properties of the processes in the game played on $Y$ are identical to those in the game played on $X$.

By Lemma \ref{le: br3} the unconstrained best-response function of player $i$ in the rescaled game is given by
\begin{align*}
t_ia_i-\sum_{j\in I\setminus \{i\} }w_{ij}\frac{a_i}{a_j}y_j.
\end{align*}

Let $w_{ij}a_{i}/a_{j}=v_{ij}$ and let $V=(v_{ij})_{i,j\in I}$ denote the matrix of rescaled weights. Our goal in this section is to characterize the class of networks that are rescalable into a symmetric network, i.e., the set of networks $W$ for which there exists a vector $a>0$ such that for every $i,j\in I$ we have
\begin{align}\label{eq: rescaleable}
\frac{a_i}{a_j}w_{ij}=v_{ij}=v_{ji}=\frac{a_j}{a_i}w_{ji}.
\end{align}
Note that rescaling a network means that we conjugate its weight matrix by the diagonal matrix of scaling weights. That is, consider $S = \mathrm{diag}(a_1, \dots, a_n)$. Then, we have $V=SWS^{-1}$. We continue to talk about rescalability instead of conjugate matrices as $S$ being a diagonal matrix is crucial for our results, as it will be apparent in Theorem \ref{th: transitive}. A similar notion appears in \cite{GolubJackson2012} who rely on this special case of matrix similarity to symmetric matrices.

We now define transitive relative importance of links.

\begin{definition}\label{def: transitive}
We say that the weight matrix $W$ shows \textit{transitive relative importance} if for every $3\leq m\leq n$ and for all pairwise distinct $i_1,i_2,\ldots,i_m\in I$ we have
\[w_{i_1i_2}w_{i_2i_3}\ldots w_{i_{m-1}i_m}w_{i_mi_1}=w_{i_1i_m}w_{i_mi_{m-1}}\ldots w_{i_3i_2}w_{i_2i_1}.\]
\end{definition}

\noindent For simplicity we write that $W$ is transitive if it satisfies Definition \ref{def: transitive}. Notice that if $W$ is symmetric, then it is also transitive. In fact, a symmetric $W$ satisfies the requirements of Definition \ref{def: transitive} for $2\leq m\leq n$. Also notice that if for every $i,j\in I$ it holds that $w_{ij}\neq 0$, then $w_{ij}w_{jk}w_{ki}=w_{ik}w_{kj}w_{ji}$ for every pairwise distinct $i,j,k\in I$ implies transitivity.

The interpretation is as follows: for $w_{ij},w_{ji}\neq 0$ define the value $r_{ij}=w_{ij}/w_{ji}$ as player $i$'s relative importance on the link between $i$ and $j$. It is clear that, if well-defined, the matrix $R=(r_{ij})_{i,j\in I}$ is symmetrically reciprocal. In this case the transitivity property reduces to having
\begin{align}\label{eq: transitive}
r_{ij}r_{jk}=r_{ik},
\end{align}
for every $i,j,k\in I$. Thus, qualitatively, if the link $\{i,j\}$ is more important to $i$ than to $j$ and if the link $\{j,k\}$ is more important to $j$ than to $k$, then the link $\{i,k\}$ must be more important to $i$ than to $k$.

\begin{remark}\label{re: transitive}
Transitivity as defined in Definition \ref{def: transitive} is identical as Kolmogorov's characterization of reversible Markov chains with the exception that network weights are allowed to be negative. If $R$ is defined, then transitivity of $W$ means that $R$ is a consistent pairwise comparison matrix of an Analytic Hierarchical Process \citep{Saaty1988}. Here, the literature seems to use the terms consistent \citep{Bozoki2010} and transitive \citep{Farkas1999} interchangeably when describing the symmetrically reciprocal comparison matrix $R$. We use transitivity of $W$ to indicate the intuition imposed on links.
\end{remark}

The final definition we require is that of sign-symmetry.

\begin{definition}\label{def: sign}
We say that the weight matrix $W$ is sign-symmetric if for every $\{i,j\}\subseteq I$ we have $\sgn(w_{ij}) = \sgn(w_{ji})$.
\end{definition}
\noindent Sign-symmetry of networks rules out one-way interactions and parasitic interactions between players.

We are ready to present the main result of this section.

\begin{theorem}\label{th: transitive}
The following statements are equivalent:
\begin{enumerate}
\item the network $W$ is rescalable into a symmetric matrix,
\item the network $W$ is transitive and sign-symmetric,
\item there exists a game $G$ on $W$ that has a quadratic best-response potential function,
\item every game $G$ on $W$ has a quadratic best-response potential function.
\end{enumerate}
\end{theorem}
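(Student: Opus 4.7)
My plan is to establish the four-way equivalence through the cycle $(1)\Rightarrow (4)\Rightarrow (3)\Rightarrow (1)$ together with the separate equivalence $(1)\Leftrightarrow (2)$. With $S=\mathrm{diag}(a_1,\ldots,a_n)$ and $a>0$, the rescaled matrix $V=SWS^{-1}$ has entries $(a_i/a_j)w_{ij}$, so its symmetry reduces to the single scalar relation $a_i^2 w_{ij}=a_j^2 w_{ji}$ for all $i,j\in I$. This relation is the hinge on which the whole argument turns.

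For $(1)\Leftrightarrow (2)$, the forward direction is almost immediate: positivity of the $a_i^2$ forces sign-symmetry, and multiplying the rescalability relation along any distinct cycle $i_1,\ldots,i_m,i_1$ makes the factors $a_{i_{k+1}}^2/a_{i_k}^2$ telescope to $1$, leaving precisely the transitivity identity of Definition \ref{def: transitive}. For the reverse, sign-symmetry implies that the non-zero entries of $W$ form an undirected graph with $w_{ij}/w_{ji}>0$ along each edge. On each connected component I would pick a spanning tree, set $a_i=1$ at a root, and propagate $a_j^2 = a_i^2\cdot w_{ij}/w_{ji}$ along tree edges; transitivity applied to the fundamental cycle created by any non-tree edge forces $\prod_k w_{i_k i_{k+1}}/w_{i_{k+1}i_k}=1$, so the propagation is globally consistent. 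Isolated vertices receive $a_i=1$.

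For $(1)\Rightarrow (4)$, given any game $G$ on a rescalable $W$, I would exhibit the explicit quadratic candidate
\begin{equation*}
\phi(x)=\sum_{i\in I} a_i^2 t_i x_i - \tfrac{1}{2}\sum_{i,j\in I} a_i^2 w_{ij} x_i x_j .
\end{equation*}
A direct differentiation, combined with the rescaling identity $a_i^2 w_{ij}+a_j^2 w_{ji}=2a_i^2 w_{ij}$ to collapse the cross-terms, yields $\partial\phi/\partial x_i = a_i^2(\bt_i(x)-x_i)$ with $\partial^2\phi/\partial x_i^2 = -a_i^2<0$. Hence the argmax of $\phi(\cdot,x_{-i})$ over $X_i$ is the clipped $\bt_i(x)$, which coincides with $b_i(x)$ by Lemma \ref{le: br3}. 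Since $\phi$ depends only on $W$ and $t$ and not on the $f_i$, this proves $(4)$, and $(4)\Rightarrow(3)$ is trivial.

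The delicate step is $(3)\Rightarrow (1)$. I would write any quadratic potential as $\phi(x)=c+\beta^\top x+\tfrac{1}{2}x^\top A x$ with $A=A^\top$, and compare the unconstrained interior FOCs of $\phi$ and $\pi_i$ on the open set of $x_{-i}$ for which $\bt_i(x)\in(0,\xh_i)$, which is non-empty by Assumption \ref{ass: interior}. Equating the resulting affine equations in $x_{-i}$ forces $A_{ii}<0$, $A_{ij}=-A_{ii}w_{ij}$ for $j\neq i$, and $\beta_i=-A_{ii}t_i$; then symmetry of $A$ yields $A_{ii}w_{ij}=A_{jj}w_{ji}$, and setting $a_i=\sqrt{-A_{ii}}$ produces the desired positive rescaling. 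The main obstacle is precisely this boundary-versus-interior issue: the definition of best-response potential only constrains argmaxes on $X_i$, so one cannot compare FOCs globally, but the interior-target assumption secures an open set on which the comparison is legitimate, which is enough to pin down the quadratic coefficients and invoke symmetry of $A$.
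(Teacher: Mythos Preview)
Your argument is correct and rests on the same ideas as the paper's proof; the differences are organizational. The paper closes the cycle $1\Rightarrow 4\Rightarrow 3\Rightarrow 2\Rightarrow 1$, while you do $1\Rightarrow 4\Rightarrow 3\Rightarrow 1$ and handle $1\Leftrightarrow 2$ separately. Your direct $(3)\Rightarrow(1)$ (reading the scaling $a_i=\sqrt{-A_{ii}}$ straight off the Hessian) is a shortcut relative to the paper's route, which first extracts transitivity and sign-symmetry from the potential and only then rebuilds the scaling via an explicit BFS construction; your spanning-tree-plus-fundamental-cycle argument for $(2)\Rightarrow(1)$ is the same construction phrased more abstractly. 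One minor slip: in $(3)\Rightarrow(1)$, matching the interior FOCs gives $A_{ij}=A_{ii}w_{ij}$, not $A_{ij}=-A_{ii}w_{ij}$; your subsequent line $A_{ii}w_{ij}=A_{jj}w_{ji}$ and the choice $a_i=\sqrt{-A_{ii}}$ are nonetheless correct, so this is a typo rather than a gap.
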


\noindent The proof is shown as an appendix.

Theorem \ref{th: transitive} shows that transitivity of a network combined with sign-symmetry is equivalent to it being rescalable into a symmetric network. Additionally, no other network has a quadratic potential function, which shows that this property cannot be exploited further. Our result thus gives a full characterization for which types of directed networks satisfy the requirements put forward in \cite{Bramoulle2014} section VI.

Theorem \ref{th: transitive} and Proposition \ref{pro: symmetric} give rise to the following corollary.

\begin{corollary}\label{cor: transitive}
Let $W$ be a transitive and sign-symmetric network and let $t$ be given such that $|X^\ast|<\infty$. Then, every BRD and BRCD in which players update regularly converges to a Nash equilibrium.
\end{corollary}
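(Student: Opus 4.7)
The plan is a short two-step reduction: use Theorem \ref{th: transitive} to rescale $W$ into a symmetric matrix, apply the known convergence result (Proposition \ref{pro: symmetric}) in the rescaled coordinates, then pull the conclusion back to the original game.

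First, I would invoke the equivalence in Theorem \ref{th: transitive} (parts 1 and 2) to obtain a scaling vector $a \in \R^n$ with $a > 0$ such that the matrix $V$ defined by $v_{ij} = (a_i/a_j)w_{ij}$ is symmetric. Define $S = \mathrm{diag}(a_1,\ldots,a_n)$, set $y_i = a_i x_i$, $Y_i = [0, a_i \xh_i]$, and $Y = \prod_i Y_i$, so that $V = S W S^{-1}$ is the weight matrix of the rescaled game on $Y$. Keeping the same benefit functions $f_i$ (composed with the rescaling) and cost parameters, the rescaled game is itself a directed network game in the sense of Definition \ref{def: dng}, now with a symmetric weight matrix.

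Second, I would verify that the three hypotheses of Proposition \ref{pro: symmetric} survive the rescaling. Regular updating is a purely combinatorial property of the sequence $(i^k)_{k\in\N}$ of updating players and is therefore unchanged when we rescale. Since $y = Sx$ is a diagonal linear bijection between $X$ and $Y$, the Nash equilibrium set of the rescaled game is $Y^\ast = S X^\ast$, so $|Y^\ast| = |X^\ast| < \infty$. Finally, as noted in the paragraph preceding Definition \ref{def: transitive}, any BRD (resp.\ BRCD with parameter $\alpha$) in $X$ corresponds to a BRD (resp.\ BRCD with the same $\alpha$) in $Y$, because rescaling by the positive diagonal matrix $S$ preserves best responses and scales distances coordinate-wise, so the centering inequality $|y^{k+1}_{i^k} - b_{i^k}(y^k)| \le \alpha |y^k_{i^k} - b_{i^k}(y^k)|$ holds in $Y$ whenever its analogue holds in $X$.

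Applying Proposition \ref{pro: symmetric} to the rescaled sequence $(y^k)_{k\in\N}$ then yields that $y^k$ converges to some $y^\ast \in Y^\ast$. Multiplying by $S^{-1}$ (a continuous bijection) gives $x^k = S^{-1} y^k \to S^{-1} y^\ast \in X^\ast$, which is the desired Nash equilibrium of the original game.

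I do not expect any real obstacle here; the content is entirely in Theorem \ref{th: transitive} and Proposition \ref{pro: symmetric}. The only step requiring any care is making explicit that the class of one-sided dynamics of each type (BRD, BRCD, with regular updating) is closed under coordinate rescaling by a positive diagonal matrix, together with the matching bijection on equilibrium sets, both of which are immediate from the linearity of the rescaling.
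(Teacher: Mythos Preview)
Your proposal is correct and matches the paper's approach exactly: the paper states the corollary as an immediate consequence of Theorem~\ref{th: transitive} and Proposition~\ref{pro: symmetric} without giving a separate proof, and you have simply spelled out the obvious reduction via rescaling. The only addition in your write-up is the explicit verification that BRD/BRCD, regular updating, and finiteness of the equilibrium set are preserved under the diagonal rescaling, which the paper leaves implicit.
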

\noindent For a fixed symmetric network, since the number of Nash equilibria is finite for almost every target vector \citep{BHPT2019}, Corollary \ref{cor: transitive} also applies to every network and almost every target vector, thus convergence is generically established for this class of networks.

\section{Weak influences and weak externalities}\label{sec: weak}

In this section we characterize another class of networks with convergent dynamics. A key concept in describing this class is the players' influence and the externalities they produce. A player $i$'s decisions are influenced by her opponents through her incoming weights, measured by their total magnitude: $\sum_{j\in I\setminus\{i\}}|w_{ij}|$. Similarly, a player $i$'s external effects on her opponents is measured by the total magnitude of her outgoing weights: $\sum_{j\in I\setminus\{i\}}|w_{ji}|$.

In this section we show that if total influences or externalities are weak enough in a network, then games played on this network have a unique Nash equilibrium and all BRDs and BRADs converge to it. As a motivating example, consider a parametric version of Example \ref{ex: circle2}.

\begin{example}[Directed cycle with weights]\label{ex: circle3}
Consider the three player directed cycle network given by the weight matrix
\begin{align*}
W = \begin{pmatrix}
1 & 0 & \delta \\ \delta & 1 & 0 \\ 0 & \delta & 1
\end{pmatrix},
\end{align*}
\noindent and with $\delta\in[0,1]$. Let $f_i(x) = \log(1+x)$, $c_i(t_i)=1/(1+t_i)$ and fix $t_i=1$ for $i\in \{1,2,3\}$ as previously. By Lemma \ref{le: br3}, the best-response functions are $b_1(x)=1-\delta x_3$, $b_2(x)=1-\delta x_1$, $b_3(x)=1-\delta x_2$. The only Nash equilibrium is $x^\ast=(1/(1+\delta),1/(1+\delta),1/(1+\delta))^\top$.

If the strength of the influences is as strong as the own effects, i.e. $\delta=1$, then best-response cycles may exist as shown in Example \ref{ex: circle2}. In a game with no interaction, $\delta=0$, best-response cycles cannot exist, since the matrix is symmetric. We now consider how the cycling properties change by changing $\delta$.

In Table \ref{tab: circle3} we show the sequence of action profiles in the BRD where players receive revision opportunities in the same, repeating order $(3,1,2)$, starting, again, in the action profile $(1,0,0)^\top$. In this order of revisions, the player holding the revision opportunity in period $k$ will revise to $\sum_{\ell=0}^k(-\delta)^\ell=(1-(-\delta)^{k+1})/(1+\delta)$ for parameter values of $\delta<1$ (see Table \ref{tab: circle3}). Playing on in this order will produce no cycles, and lead to convergence to the Nash equilibrium.

\begin{table}
\centering
\begin{tabular}{cccccccc}
\toprule
$k$ & $x_1^k$ & $x_2^k$ & $x_3^k$ & $i^k$ & $\sum_{j\in I}w_{i^kj}x_j$ &$b_{i^k}(x_{i^k})$ \\
\hline
$0$ & $1$ & $0$ & $0$ & $3$&$0$&$1$\\
$1$ & $1$ & $0$ & $1$ & $1$&$\delta$&$1-\delta$\\
$2$ & $1-\delta$ & $0$ & $1$ & $2$&$\delta-\delta^2$&$1-\delta+\delta^2$\\
$3$ & $1-\delta$ & $1-\delta+\delta^2$ & $1$ & $3$&$\delta-\delta^2+\delta^3$&$1-\delta+\delta^2-\delta^3$\\
\bottomrule
\end{tabular}
\caption{The best-response dynamic of Example \ref{ex: circle3}.}
\label{tab: circle3}
\end{table}
\end{example}

Example \ref{ex: circle3} suggests that even a non-negligible level of influences/externalities can lead to the disappearance of best-response cycles. As we will show in this section, this turns out to be a general property: if every player's total influences or total externalities are smaller than her own weight, the game has a single Nash equilibrium and every BRD and BRAD converges to it.

We introduce these games formally.

\begin{definition}\label{def: weak}
A network $W$ has 
\begin{itemize}
\item \textit{weak influences} if for every $i\in I$ it holds that $\sum_{j\in I\setminus\{i\}}|w_{ij}|<1$,
\item \textit{weak externalities} if for every $i\in I$ it holds that $\sum_{j\in I\setminus\{i\}}|w_{ji}|<1$.
\end{itemize}
\end{definition}

Networks with weak influences are characterized by \textit{row diagonally dominant} weight matrices, while those with weak externalities have \textit{column diagonally dominant} weight matrices. Games with weak influences satisfy Assumption 2b of \cite{PariseOzdaglar2019}, while both classes are covered by \cite{Scutari2014}'s Proposition 7.

It turns out that both classes of games have a unique Nash equilibrium and every BRD and BRAD converges to it. Once again we can make use of rescaling: any network which is rescalable into one of the two classes inherits the uniqueness of the Nash equilibrium as well as the convergence properties. As before, for $a\in \mathbb{R}^n$, $a>0$, define $V=(v_{ij})_{i,j\in I}$ as $v_{ij}=w_{ij}a_i/a_j$. We show that the two classes are rescalable into each other. Furthermore, a network is rescalable to either class if and only if the spectral radius of $|W|-I_n$ is less than one, where $I_n$ is the $n\times n$ identity matrix.

Recall that the \emph{spectral radius} $\rho(M)$ of a square matrix $M\in \C^{n\times n}$ is the largest absolute value of its eigenvalues, i.e.,
\[\rho(M)=\max\{ |\lambda | : \lambda\in \C \text{ is an eigenvalue of } M\}.\]

The next proposition states that a network $W$ can be rescaled into one with weak influences and one with weak externalities if and only if $\rho(|W|-I_n)<1$.

\begin{proposition}
\label{pro: weakspectral}
The following statements are equivalent for $W$.
\begin{enumerate}
\item There exists a scaling vector $a\in \R^n$, $a>0$ such that the rescaled matrix $V$ has weak influences.
\item There exists a scaling vector $a\in \R^n$, $a>0$ such that the rescaled network $V$ has weak externalities.
\item $\lim_{k\to\infty} (|W|-I_n)^k=0$.
\item $\rho(|W|-I_n)<1$. 
\end{enumerate}
\end{proposition}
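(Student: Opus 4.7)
The plan is to prove a cycle of implications among the four statements, with most of the substantive work concentrated in one direction. The equivalence $(3)\Leftrightarrow(4)$ is a classical result about matrix powers: for any square matrix $M$, one has $M^k\to 0$ if and only if $\rho(M)<1$, which I would cite without reproving. It therefore suffices to link the rescaling conditions $(1)$ and $(2)$ with the spectral condition $(4)$.

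The key observation is that if $S=\mathrm{diag}(a_1,\ldots,a_n)$ with $a>0$ and $V=SWS^{-1}$ is the rescaled matrix, then because each $a_i$ is positive we have $|V|=S|W|S^{-1}$ entrywise, hence $|V|-I_n=S(|W|-I_n)S^{-1}$ is similar to $|W|-I_n$ and shares its spectral radius. For $(1)\Rightarrow(4)$ I would rewrite weak influences as $\lVert |V|-I_n\rVert_\infty<1$ and apply the standard bound $\rho(\cdot)\leq \lVert\cdot\rVert$ for any induced matrix norm, combined with this similarity. The argument for $(2)\Rightarrow(4)$ is identical, using the column-sum norm $\lVert\cdot\rVert_1$.

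For the converse directions $(4)\Rightarrow(1)$ and $(4)\Rightarrow(2)$, set $N=|W|-I_n$, which is entrywise non-negative. Since $\rho(N)<1$, the Neumann series yields $(I_n-N)^{-1}=\sum_{k\geq 0}N^k\geq 0$ entrywise. Define $b=(I_n-N)^{-1}\mathbf{1}$, where $\mathbf{1}$ is the all-ones vector; then $b\geq \mathbf{1}>0$ and $(I_n-N)b=\mathbf{1}$, so $Nb=b-\mathbf{1}<b$ componentwise. Choosing $a_i=1/b_i$, a direct computation yields $\sum_{j\neq i}|v_{ij}|=a_i\sum_{j\neq i}|w_{ij}|b_j=a_i(Nb)_i<a_ib_i=1$ for every $i$, which is the weak-influences property. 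For weak externalities I apply the same construction to $N^\top$: the vector $a=(I_n-N^\top)^{-1}\mathbf{1}$ is strictly positive and satisfies $N^\top a<a$, which unfolds to the column diagonal dominance of the corresponding $V$.

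The main obstacle is the implication from (4) to (1) and (2): the spectral condition is a limiting statement about high powers of $|W|-I_n$, whereas weak influences or externalities is a pointwise inequality on rows or columns of a single, explicitly chosen rescaled matrix. The non-negativity of $N$ is what makes the bridge possible, since it guarantees $\sum_k N^k$ is entrywise non-negative and applying this operator to $\mathbf{1}$ produces a positive vector strictly dominating $Nb$, precisely the strict diagonal dominance needed. Everything else in the argument is bookkeeping with matrix norms and the similarity $|V|-I_n=S(|W|-I_n)S^{-1}$.
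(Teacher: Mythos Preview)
Your proof is correct. The structure---bookkeeping via similarity $|V|-I_n=S(|W|-I_n)S^{-1}$ for the easy directions, and a construction of an explicit scaling vector for the hard direction $(4)\Rightarrow(1),(2)$---is sound, and your Neumann-series argument for the latter works exactly as stated since $N=|W|-I_n$ is entrywise non-negative.

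The paper takes a somewhat different route for the hard direction. Instead of building the scaling vector via $b=(I_n-N)^{-1}\mathbf{1}$, it invokes a version of the Perron--Frobenius theorem (cited from Bertsekas--Tsitsiklis) asserting that for any non-negative matrix $M$ and any $\varepsilon>0$ there is a positive weight vector $u$ with $\rho(M)\leq\|M\|_\infty^u<\rho(M)+\varepsilon$; choosing $\varepsilon<1-\rho(|W|-I_n)$ yields a weighted $\infty$-norm strictly below $1$, which is exactly row diagonal dominance of the rescaled matrix. The paper then obtains the column case by transposing. Your approach is more elementary and more constructive: it avoids the Perron--Frobenius citation and produces an explicit formula for the scaling vector, at the cost of needing the equivalence $(3)\Leftrightarrow(4)$ up front to justify the convergence of $\sum_k N^k$. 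The paper's approach, by contrast, keeps the spectral-radius condition as the primitive and derives $(3)$ as a consequence of diagonal dominance via submultiplicativity of the weighted norm. Both are standard; yours is arguably cleaner for a reader who does not already have the weighted-norm Perron--Frobenius statement at hand.
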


The fact that a diagonally dominant $W$ satisfies points 3 and 4 of Proposition \ref{pro: weakspectral}, as well as the equivalence of 3 and 4 are well-known in linear algebra. Our characterization adds the notion of rescalability.

\begin{proposition}
\label{pro: weaknash}
If $\rho(|W|-I_n)<1$, then every game played on the network $W$ has a unique Nash equilibrium.
\end{proposition}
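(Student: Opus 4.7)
The plan is to reduce to a contraction mapping via rescaling, and then apply Banach's fixed-point theorem. By Proposition \ref{pro: weakspectral}, the hypothesis $\rho(|W|-I_n)<1$ guarantees a scaling vector $a\in\R^n$, $a>0$, such that the rescaled matrix $V=(v_{ij})$ has weak influences, i.e.\ $\sum_{j\in I\setminus\{i\}}|v_{ij}|<1$ for every $i\in I$. Since rescaling is a bijective change of coordinates on the action space that preserves the equilibrium structure (as noted right before Definition \ref{def: transitive}), it suffices to prove uniqueness of the Nash equilibrium in the rescaled game played on $Y=\prod_{i\in I}[0,a_i\xh_i]$ with interaction matrix $V$ and targets $t_ia_i$.

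Set $\gamma:=\max_{i\in I}\sum_{j\in I\setminus\{i\}}|v_{ij}|<1$. By Lemma \ref{le: br3} (applied to the rescaled game), the best response $b_i(y)$ equals the projection of the unconstrained best response $\tilde b_i(y)=t_ia_i-\sum_{j\in I\setminus\{i\}}v_{ij}y_j$ onto the interval $[0,a_i\xh_i]$. For any $y,y'\in Y$,
\[
|\tilde b_i(y)-\tilde b_i(y')|\leq \sum_{j\in I\setminus\{i\}}|v_{ij}|\,|y_j-y'_j|\leq \gamma\,\|y-y'\|_\infty .
\]
Since the projection onto a closed interval is $1$-Lipschitz, $|b_i(y)-b_i(y')|\leq |\tilde b_i(y)-\tilde b_i(y')|$, and taking the maximum over $i\in I$ yields $\|b(y)-b(y')\|_\infty\leq\gamma\,\|y-y'\|_\infty$. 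Thus $b\colon Y\to Y$ is a $\gamma$-contraction with respect to $\|\cdot\|_\infty$.

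The metric space $(Y,\|\cdot\|_\infty)$ is complete (it is a closed subset of $\R^n$), so Banach's fixed-point theorem delivers a unique fixed point $y^\ast\in Y$. Fixed points of $b$ are precisely the Nash equilibria, so the rescaled game has a unique Nash equilibrium, and hence so does the original game $G$.

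The only delicate step is choosing the right norm so that the rescaling from Proposition \ref{pro: weakspectral} actually converts the spectral hypothesis into a bona fide contraction; row diagonal dominance pairs with $\|\cdot\|_\infty$, after which the nonexpansiveness of coordinatewise projection handles the boundary constraints automatically and the rest is the standard Banach argument. (One could equivalently start from weak externalities, using $\|\cdot\|_1$ and the fact that column diagonal dominance gives a contraction in that norm.)
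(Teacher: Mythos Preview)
Your proof is correct and takes essentially the same approach as the paper: rescale (equivalently, use a weighted $\ell_\infty$-norm) so that row diagonal dominance of $V$ makes the best-response map a contraction, then invoke uniqueness of the fixed point. The paper works with the weighted norm $\|\cdot\|_\infty^{a^{-1}}$ on the original space and argues by contradiction from two equilibria, while you rescale coordinates and apply Banach directly; your one-line appeal to the $1$-Lipschitz property of projection onto an interval replaces the paper's case-by-case Lemma \ref{le: contraction}, but the content is the same.
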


By \cite{Ui2016}, every positive definite $W$ has a unique equilibrium. If $W$ is symmetric, then the condition $\rho(|W|-I_n) < 1$ implies positive definiteness, but for an asymmetric $W$ neither condition is implied by the other.

Proposition \ref{pro: weaknash} follows from the fact that such networks are rescalable to diagonally dominant matrices (Proposition \ref{pro: weakspectral}), allowing us to use \cite{Moulin1986}, Chapter 6, Theorems 2 and 3. As a result, games played on such networks are \emph{dominance solvable} \cite{Moulin1984}, that is, iterated elimination of dominated strategies leads to a unique solution. Thus, they have a unique Nash equilibrium and convergent BRD. We now show a stronger convergence property: the convergence of BRAD. 

\begin{theorem}
\label{thm: weak}
If $\rho(|W|-I_n)<1$, then in every game played on network $W$, every BRAD (and hence every BRD) converges to the unique Nash equilibrium.
\end{theorem}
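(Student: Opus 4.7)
The plan is to reduce to the case of weak influences via rescaling and then run a sup-norm contraction argument adapted to the asynchronous updates in a BRAD. By Proposition~\ref{pro: weakspectral}, the hypothesis $\rho(|W|-I_n)<1$ yields a scaling vector $a>0$ such that the rescaled matrix $V$ with $v_{ij}=w_{ij}a_i/a_j$ has weak influences. Since rescaling is a diagonal coordinate change that preserves Nash equilibria and maps BRADs to BRADs with the same approach parameter, I may assume without loss of generality that $W$ itself satisfies $\delta:=\max_{i\in I}\sum_{j\neq i}|w_{ij}|<1$. Let $x^\ast$ denote the unique Nash equilibrium, whose existence is provided by Proposition~\ref{pro: weaknash}.

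The key observation is that the constrained best-response map is a $\delta$-contraction with respect to $\|\cdot\|_\infty$ around $x^\ast$. By Lemma~\ref{le: br3}, $b_i(x)$ is obtained by clipping the affine expression $\tilde b_i(x)=t_i-\sum_{j\neq i}w_{ij}x_j$ to $[0,\overline{x}_i]$, and clipping is $1$-Lipschitz, so
\[
|b_i(x)-x^\ast_i|=|b_i(x)-b_i(x^\ast)|\le|\tilde b_i(x)-\tilde b_i(x^\ast)|\le\sum_{j\neq i}|w_{ij}|\,\|x-x^\ast\|_\infty\le\delta\,\|x-x^\ast\|_\infty.
\]
For a BRAD $(x^k)$ with approach parameter $\beta$, the sign and distance conditions in Definition~\ref{def: br dynamic} let me write $x^{k+1}_{i^k}=x^k_{i^k}+\lambda^k\bigl(b_{i^k}(x^k)-x^k_{i^k}\bigr)$ for some $\lambda^k\in[1-\beta,1]$ (taking $\lambda^k=1$ in the trivial case $x^k_{i^k}=b_{i^k}(x^k)$). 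Writing $M^k:=\|x^k-x^\ast\|_\infty$, a convex-combination estimate combined with the displayed Lipschitz bound gives
\[
|x^{k+1}_{i^k}-x^\ast_{i^k}|\le(1-\lambda^k)|x^k_{i^k}-x^\ast_{i^k}|+\lambda^k|b_{i^k}(x^k)-x^\ast_{i^k}|\le\bigl(1-\lambda^k(1-\delta)\bigr)M^k\le\gamma M^k,
\]
where $\gamma:=1-(1-\beta)(1-\delta)<1$. As the coordinates $j\neq i^k$ are unchanged, this yields $M^{k+1}\le M^k$, so $(M^k)$ is monotonically non-increasing.

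The final step is a standard asynchronous-iteration argument using that every player updates infinitely often. Pick indices $0=k_0<k_1<k_2<\ldots$ such that in each window $[k_n,k_{n+1})$ every player revises at least once. For each $i\in I$, let $\tau_i(n)$ be the most recent index in this window at which player $i$ updates; then $x^{k_{n+1}}_i=x^{\tau_i(n)+1}_i$, and combining the single-step bound with monotonicity gives $|x^{k_{n+1}}_i-x^\ast_i|\le\gamma M^{\tau_i(n)}\le\gamma M^{k_n}$. Taking the maximum over $i$ produces $M^{k_{n+1}}\le\gamma M^{k_n}$, so $M^{k_n}\to 0$ geometrically, and monotonicity upgrades this to $M^k\to 0$, i.e., $x^k\to x^\ast$. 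The main technical subtlety is exactly this asynchrony: a single update can leave the non-updating coordinates at their prior distances from $x^\ast$, so the strict contraction must be extracted over blocks of steps long enough for every player to revise — which is precisely why the infinitely-often clause of Definition~\ref{def: frequently} is essential.
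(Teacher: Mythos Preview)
Your proof is correct and follows essentially the same strategy as the paper's: reduce via Proposition~\ref{pro: weakspectral} to the weak-influences case, establish a sup-norm contraction of the best response around $x^\ast$, derive a single-step bound with contraction factor $1-(1-\beta)(1-\delta)=\delta+\beta-\delta\beta$ (matching the paper's $\gamma+\beta-\gamma\beta$), and then use an asynchronous block argument exploiting that every player revises infinitely often. The only cosmetic difference is that you rescale first and work in the standard $\|\cdot\|_\infty$, whereas the paper carries the weighted norm $\|\cdot\|_\infty^{a^{-1}}$ throughout; your convex-combination estimate also streamlines the paper's interval-based argument for Claim~\ref{claim: weak1}.
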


Theorem \ref{thm: weak} is related to Theorem 4 of \cite{Moulin1986}, Chapter 6, which shows local stability of equilibria with respect to BRD for which the Jacobian's spectral radius is less than one. Our condition is global due to the linear best response functions of the network environment hence we have global convergence. Another related result is Theorem 4.1 of \cite{GabayMoulin1980} who shows convergence of BRAD where revisions arrive in a fixed order for diagonally dominant Jacobians. Our result is more general in directed network games as we cover a wider class of networks and in that revision opportunities may arrive in any order.

Finally, we show that games played on such networks are best-response potential games.
 
\begin{proposition}
\label{pro: weakpot}
Consider a game played on a network $W$, and assume that $a\in \R^n$, $a>0$ is a scaling vector such that the rescaled network $V$ has weak externalities. Then, the function
\[\phi'(x)=-\sum_{i\in I}a_i|x_i-b_i(x)|\]
\noindent is a best-response potential function of the game.
\end{proposition}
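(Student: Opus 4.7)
The approach I would take relies on comparing $\phi'$ across values of $x_i$ while exploiting two pieces of structure: first, the formula in Lemma \ref{le: br3} shows that $b_i(x)$ depends only on $x_{-i}$, so the $i$-th summand of $\phi'$ is just $-a_i\lvert x_i - b_i(x_{-i})\rvert$, uniquely maximized at $x_i = b_i(x_{-i})$; second, the weak-externalities hypothesis on $V$ translates, via $v_{ji} = w_{ji}a_j/a_i$, into the inequality
\[
a_i \; > \; \sum_{j\in I\setminus\{i\}} a_j\lvert w_{ji}\rvert \qquad \text{for every } i\in I.
\]
So the plan is to show that, for each fixed $i$ and $x_{-i}$, the direct gain from aligning $x_i$ with $b_i(x_{-i})$ strictly dominates the indirect disturbance that a change in $x_i$ inflicts on the other summands $-a_j\lvert x_j - b_j(x)\rvert$, $j\neq i$. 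Combined with the uniqueness of the best response in $\pi_i$, this identifies $\arg\max_{x_i}\phi'(x) = \{b_i(x_{-i})\} = \arg\max_{x_i}\pi_i(x)$, which is exactly the best-response potential property of Definition \ref{def: br potential}.

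The key Lipschitz estimate is the following. By Lemma \ref{le: br3}, for $j\neq i$, $b_j(x)$ is the projection of the affine map $x_i\mapsto t_j - \sum_{k\neq j} w_{jk}x_k$ onto $[0,\overline{x}_j]$, with slope $-w_{ji}$ in the variable $x_i$. Since the projection onto an interval is $1$-Lipschitz, the map $x_i\mapsto b_j(x_i,x_{-i,-j})$ is $\lvert w_{ji}\rvert$-Lipschitz. Hence, by the triangle inequality, for any two candidates $x_i, x_i'\in X_i$,
\[
\bigl\lvert\,\lvert x_j - b_j(x_i',\cdot)\rvert - \lvert x_j - b_j(x_i,\cdot)\rvert\,\bigr\rvert \;\leq\; \lvert b_j(x_i',\cdot) - b_j(x_i,\cdot)\rvert \;\leq\; \lvert w_{ji}\rvert\,\lvert x_i' - x_i\rvert.
\]
Applying this with $x_i = b_i(x_{-i})$ and an arbitrary $x_i'\in X_i$, and noting that the $i$-th term contributes exactly $a_i\lvert x_i' - b_i(x_{-i})\rvert$ to the difference $\phi'(b_i(x_{-i}),x_{-i}) - \phi'(x_i',x_{-i})$, yields
\[
\phi'(b_i(x_{-i}),x_{-i}) - \phi'(x_i',x_{-i}) \;\geq\; \Bigl(a_i - \sum_{j\in I\setminus\{i\}}a_j\lvert w_{ji}\rvert\Bigr)\,\lvert x_i' - b_i(x_{-i})\rvert.
\]
By the translated weak-externalities condition, the bracket is strictly positive, so the right-hand side is strictly positive whenever $x_i'\neq b_i(x_{-i})$, establishing that $b_i(x_{-i})$ is the unique maximizer of $\phi'(\cdot,x_{-i})$.

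The only place where real care is needed is the Lipschitz step, because in principle the first term of $\phi'$ could be offset by the sum over $j\neq i$ collapsing the gain; this is precisely what the column-diagonal-dominance of $V$ prevents. The remaining assembly is routine: the above chain of inequalities holds on all of $X_i$, delivers strict inequality off $b_i$, and so certifies the best-response potential property for every player $i$ and every $x_{-i}\in X_{-i}$.
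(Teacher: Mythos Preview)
Your argument is correct and follows essentially the same route as the paper: you translate weak externalities of $V$ into the inequality $a_i>\sum_{j\neq i}a_j\lvert w_{ji}\rvert$ (the paper's Lemma \ref{le: rescaleweak}), use the $\lvert w_{ji}\rvert$-Lipschitz property of $x_i\mapsto b_j(x)$ (the paper's Lemma \ref{le: effectonbr}), and combine these via the triangle inequality into the same lower bound $\bigl(a_i-\sum_{j\neq i}a_j\lvert w_{ji}\rvert\bigr)\lvert x_i'-b_i\rvert$ on the gain in $\phi'$. The only cosmetic difference is that the paper assumes a maximizer $x_i^1$ and forces $x_i^1=b_i$ via an equality chain, whereas you compare $b_i$ directly against an arbitrary $x_i'$; the computation is the same.
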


\noindent The proofs of Propositions \ref{pro: weakspectral} and \ref{pro: weakpot}, as well as that of Theorem \ref{thm: weak} are shown as an appendix.

As demonstrated by Example \ref{ex: circle2}, a spectral radius equal to $1$ leads to best-response cycles, hence these results are tight. Additionally, notice that the BRCD may lead to cycles in this gameclass. For instance, the BRD shown in Example \ref{ex: circle2} is a BRCD in Example \ref{ex: circle3} for $\delta=0.9$.

It is easy to show that a necessary condition of a network being rescalable to one with weak externalities is that no pair of players have an amplifying link, i.e., one where the product of weights is larger than the size of the own effects in absolute value.

\begin{lemma}\label{le: amplifying}
Let $W$ be given such that there exists an $a\in\mathbb{R}^n$, $a>0$ for which the rescaled network $V$ is with weak externalities. Then, for every $i,j\in I$ it holds that $|w_{ij}w_{ji}|<1$.
\end{lemma}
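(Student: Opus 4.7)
The proof is essentially a one-line observation, so my plan is to identify the key inequality and then verify that rescaling preserves the product of reciprocal weights.

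The approach is the following. First I would unpack the definition of weak externalities applied to the rescaled network $V$: for every column index $i\in I$ we have $\sum_{j\in I\setminus\{i\}}|v_{ji}|<1$, where $v_{ji}=w_{ji}a_j/a_i$. The essential observation is that when a sum of nonnegative terms is strictly less than $1$, each individual term must be strictly less than $1$. Applying this to a fixed pair $i\neq j$, I obtain $|v_{ji}|<1$ from the column-$i$ inequality, and by the same reasoning applied to the column-$j$ inequality, $|v_{ij}|<1$.

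Next I would multiply these two strict inequalities to get $|v_{ij}\,v_{ji}|<1$. The last step is a cancellation: expanding via the rescaling formula,
\[
v_{ij}\,v_{ji}=\left(w_{ij}\frac{a_i}{a_j}\right)\!\left(w_{ji}\frac{a_j}{a_i}\right)=w_{ij}\,w_{ji},
\]
so the rescaling factors cancel and we conclude $|w_{ij}\,w_{ji}|<1$, as required.

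There is no real obstacle here; the only thing to be careful about is to use the column-sum version of diagonal dominance (corresponding to weak externalities as defined in Definition~\ref{def: weak}) rather than the row-sum version, and to apply it to two different columns to extract both $|v_{ij}|<1$ and $|v_{ji}|<1$. The cancellation of $a_i/a_j$ with $a_j/a_i$ is the reason the statement is genuinely about $W$ and not just about the particular rescaling $V$.
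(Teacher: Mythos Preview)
Your proposal is correct and follows essentially the same route as the paper: both arguments use that weak externalities of $V$ forces $|v_{ij}|<1$ and $|v_{ji}|<1$ for $i\neq j$, and then exploit the cancellation $v_{ij}v_{ji}=w_{ij}w_{ji}$. The only cosmetic difference is that the paper phrases it as a proof by contradiction, whereas you argue directly.
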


\begin{proof}
Suppose that $|w_{ij}w_{ji}|>1$. Since $V$ has weak externalities we must have $v_{ij},v_{ji}<1$. However,
\[|v_{ij}v_{ji}|=|\frac{a_i}{a_j}w_{ij}\frac{a_j}{a_i}w_{ji}|>1,\]
\noindent a contradiction.
\end{proof}

\noindent By Definition \ref{def: weak} it is clear that unlike networks that are rescalable to be symmetric, those that are rescalable to exhibit weak influences or externalities may allow one-way interactions as well as parasitic ones. However, in the latter case, the link cannot be amplifying (Lemma \ref{le: amplifying}), which rules out the best-response cycles seen in Example \ref{ex: parasite}. Additionally, as seen in Example \ref{ex: circle2}, if the network contains directed cycles we once again get best-response cycles.

Without directed cycles, however, the network can always be rescaled into one with weak externalities. We now formalize this fact starting with the definition of directed acyclic networks.

\begin{definition}\label{def: dan}
A network $W$ is called a \textit{directed acyclic network (DAN)} if for every $i<j$, $i,j\in I$ we have $w_{ij}=0$.
\end{definition}

\noindent In other words, if $W$ is lower triangular, then the game is played on a \textit{DAN}. One can think of an equivalent characterization with upper triangular matrices, or any permutation of players which leaves $W$ as a lower triangular matrix. In Definition \ref{def: dan} players with lower indices are higher up in the hierarchy, i.e. player $1$ is unaffected by any other player's action, player $2$ is only affected by player $1$, etc.

An economic application of this game is pollution management of a number of cities with industrial zones located along a river. Each city decides on the amount of money spent on cleaning the industrial waste in the river and their decisions affect only those cities that are located downstream. The cities' target values describe the point at which the marginal benefits of an extra dollar's worth of cleaner water are the same as the costs for that city. Models of this problem include \cite{NiWang2007} and \cite{DongNiWang2012}.

Games of the above nature introduce a hierarchy of players. Such hierarchies are present in most production chains where goods -- and therefore externalities -- flow downstream, or in some organizational networks such as the military where orders are traveling down the chain of command. Directed acyclic cycles have applications in biology as well among many other fields; most trophic networks also have hierarchical features with apex predators on the highest level and prey animals on lower levels.

We now show that every game played on a DAN is rescalable into one with weak externalities.

\begin{proposition}\label{pro: dan}
For every DAN, $W$ there exists $a\in\mathbb{R}^n$, $a>0$ such that the rescaled network, $V$ has weak externalities.
\end{proposition}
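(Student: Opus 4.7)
The plan is to exploit the lower-triangular structure of $W$ to construct the scaling vector $a$ recursively from bottom to top. Recall that in a DAN, $w_{ij}=0$ whenever $i<j$, so the only possibly nonzero off-diagonal entries in column $i$ are $w_{ji}$ with $j>i$. The weak-externalities condition on the rescaled network $V$ demands, for each $i\in I$,
\[
\sum_{j\in I\setminus\{i\}} |v_{ji}| \;=\; \sum_{j>i} |w_{ji}|\,\frac{a_j}{a_i}\;<\;1,
\]
which after clearing denominators becomes $\sum_{j>i}|w_{ji}|\,a_j < a_i$. Thus we need $a_i$ to dominate a weighted sum involving only the strictly larger-indexed $a_j$'s.

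This suggests defining $a$ by backward recursion. First I would set $a_n := 1$ (there are no constraints on column $n$ since the sum is empty). Then for $i=n-1,n-2,\ldots,1$, define
\[
a_i \;:=\; 1 \;+\; \sum_{j=i+1}^{n} |w_{ji}|\,a_j.
\]
Because each $a_i$ is built only from $a_j$'s with $j>i$ that have already been fixed, the recursion is well-posed, and a straightforward induction shows $a_i\geq 1>0$ for every $i$.

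It then remains to verify the weak-externalities bound. For $i=n$ it is trivial. For $i<n$,
\[
\sum_{j\in I\setminus\{i\}}|v_{ji}| \;=\; \frac{1}{a_i}\sum_{j=i+1}^{n}|w_{ji}|\,a_j \;=\; \frac{a_i-1}{a_i}\;<\;1,
\]
by the very definition of $a_i$. Hence the rescaled matrix $V$ has weak externalities, completing the proof.

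I do not anticipate a genuine obstacle: the lower-triangular shape of $W$ decouples the $n$ column-sum inequalities into a triangular system that can be solved sequentially, and the additive $+1$ in the recursion guarantees strict inequality. The only point worth flagging is that this construction crucially uses the DAN hypothesis: with a directed cycle the recursion would close on itself and no such $a$ would exist, consistent with the cycling phenomena identified in Section \ref{sec: cycle}.
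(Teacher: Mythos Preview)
Your proof is correct and follows essentially the same approach as the paper: both exploit the lower-triangular structure of $W$ to construct $a$ by backward recursion, choosing each $a_i$ large enough that $\sum_{j>i}|w_{ji}|\,a_j < a_i$. The only difference is cosmetic---you give an explicit formula $a_i = 1 + \sum_{j>i}|w_{ji}|\,a_j$, whereas the paper simply asserts that such an $a_i$ can be chosen---but the argument is the same.
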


\begin{proof}
Let $W$ be a DAN, i.e., we have $w_{ij}=0$ for every $i<j$. First, let $a_n=1$. We will define the rest of the $a_i$'s recursively. Assume that we have already defined $a_n$, $a_{n-1}, \dots, a_{n-j+1}$. Let us choose $a_j>0$ so that we have
\begin{align}
\sum_{k=j+1}^n a_k |w_{kj}| < a_j. \label{eq: DAN}
\end{align}
Now we can verify that the rescaled matrix $V$ has weak externalities. Let $j\in I$. We have
\begin{align*}
\sum_{i\in I\setminus \{j\}} |v_{ij}| &=\sum_{i\in I\setminus\{j\}} |w_{ij}|\frac{a_i}{a_j}\\
&= \sum_{i=1}^{j-1} 0\cdot \frac{a_i}{a_j} + \frac{1}{a_j} \sum_{i=j+1}^n |w_{ij}| a_i\\
&< 0 +  \frac{1}{a_j}\cdot a_j=1 \qquad \text{ by (\ref{eq: DAN}).}
\end{align*}
This concludes the proof.
\end{proof}

\noindent The following corollary regarding the convergence of BRD and BRAD is implied by Theorem \ref{thm: weak}.

\begin{corollary}\label{cor: dan}
For every game played on a \textit{DAN}, every BRD and BRAD converges to the game's unique Nash equilibrium.
\end{corollary}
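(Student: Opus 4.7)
The plan is to chain together the results already established in the section, so the corollary follows with essentially no new work. The strategy is to reduce the DAN case to the weak-externalities case via Proposition \ref{pro: dan}, then invoke the spectral characterization of Proposition \ref{pro: weakspectral} to convert rescalability into the spectral hypothesis of Theorem \ref{thm: weak}, and finally apply Theorem \ref{thm: weak} (together with Proposition \ref{pro: weaknash} for uniqueness) to conclude.

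More concretely, the first step is to fix a DAN $W$ and invoke Proposition \ref{pro: dan} to produce a positive scaling vector $a\in\mathbb{R}^n$ for which the rescaled weight matrix $V$ has weak externalities. The second step is to apply Proposition \ref{pro: weakspectral}, specifically the equivalence between rescalability into a network with weak externalities and the condition $\rho(|W|-I_n)<1$, to conclude that the original network $W$ satisfies this spectral bound. The third step is to invoke Proposition \ref{pro: weaknash} to get the existence of a unique Nash equilibrium, and Theorem \ref{thm: weak} to obtain convergence of every BRAD (and therefore every BRD, since every BRD is a BRAD) to this unique equilibrium.

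There is no real obstacle: each link in the chain has been set up explicitly in the preceding two propositions and the theorem. The only point worth flagging is that the convergence and uniqueness claims both pertain to the original, unrescaled game on $W$, but this is unproblematic because rescaling by a positive diagonal matrix is a bijective change of coordinates that preserves Nash equilibria and transforms BRDs/BRADs into BRDs/BRADs with the same approach parameter, as the paper notes just before Definition \ref{def: transitive}. Hence the result follows by the stated composition of Proposition \ref{pro: dan}, Proposition \ref{pro: weakspectral}, Proposition \ref{pro: weaknash}, and Theorem \ref{thm: weak}.
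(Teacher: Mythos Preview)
Your proposal is correct and follows essentially the same route as the paper: the paper states that the corollary is implied by Theorem~\ref{thm: weak}, which, given that Proposition~\ref{pro: dan} establishes rescalability to weak externalities and Proposition~\ref{pro: weakspectral} converts this into the spectral condition $\rho(|W|-I_n)<1$, is exactly the chain you spell out. Your explicit inclusion of Proposition~\ref{pro: weaknash} for uniqueness is fine but redundant, since Theorem~\ref{thm: weak} already asserts convergence to the unique Nash equilibrium.
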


\section{Conclusion}\label{sec: conclusion}

In this paper we analyze directed network games, a generalization of the private provision of public goods games model to include possibly non-reciprocal relationships. These cover one-way interactions, unequal interactions, and parasitism. While weighted networks and simple graphs are very useful frameworks, more nuanced models of social and economic networks should include non-reciprocal interactions.

While best-response dynamics on games played on symmetric networks are known to converge to a Nash equilibrium due to the games' potential structure, this is not true in general for networks with asymmetric weight matrices. In this paper we show that both one-way interactions and parasitic interactions can create best-response cycles. This questions the interpretation of the Nash equilibrium as the result of individual improvements by the players. Together with other known problems of the Nash equilibrium both conceptual and behavioral, equilibrium analysis of such games may be of questionable value in settings with possibly non-reciprocal interactions.

There are classes of asymmetric networks, however, where the predictive power of the Nash equilibrium is retained. In this paper we highlight two such classes; those that can be rescaled into symmetric networks and those that are rescalable to networks with weak influences or weak externalities. We characterize the former type by transitive relative importance of players and sign-symmetry of the weight matrix. Additionally, this class captures all networks with quadratic best-response potential functions, indicating that other network types with convergence require different approaches to identify. 

The latter class captures individualistic social networks as well as situations where the economic externalities have been internalized. We show that these types are equivalent with respect to rescaling and any network with a spectral radius less than one is rescalable to either. Such games are best-response potential games, have a unique Nash equilibrium, and all BRDs and BRADs converge to it. A necessary condition for a network to be rescalable to one with weak externalities is the absence of amplifying links.

Directed acyclic networks can always be rescaled into networks with weak externalities. This subclass imposes a hierarchy on the players; players are only influenced by opponents on higher levels and only influence those on lower levels. Directed acyclic networks have an established application in economics in the pollution management of cities along a river or a river network, but, since the hierarchical structure is widely studied, there are other potential fields of application.

Our results unlock a number of insights into network games. The most apparent general result is a negative one: the convergence properties of games played on symmetric networks do not generalize well for the asymmetric case. For directed cycles and parasitic interactions best-response cycles may appear, thus identifying convergent classes of networks that include any of these types of interactions are likely to require different methodologies than the best-response potentials, rescaling, and spectral properties used in this paper.

Our positive contribution consists of the full exploration of the idea of rescalability into symmetric matrices as well as the identification of weak influences/exter\-nalities as networks with convergent dynamics and the full characterization of the latter class. On a technical level we identify all network games with a quadratic potential structure and unlock a new class with a different potential structure. We thus broaden the set of sufficient conditions that guarantee convergence in network games. Finding a set of sufficient and necessary conditions, or, failing that, broader sets of sufficient conditions, is an important direction left for future research.

\appendix
\section{Appendix}
\subsection*{Proofs for Section \ref{sec: transitive}}

\subsubsection*{Theorem \ref{th: transitive}}
It is clear that statement 4 implies statement 3. We show the remaining three implications.

\begin{proof}[Proof of 1 $\Rightarrow$ 4.]
Suppose that the matrix $W$ can be rescaled into a symmetric matrix $(v_{ij})_{i,j\in I}$ with the vector $a\in \mathbb{R}^n$, $a>0$, i.e., we have $v_{ij}=w_{ij}a_i/a_j$. Let $G$ be any game on $W$. We now show that the following quadratic function is a best-response potential of the game.
\begin{align*}
    \phi^Q (x)= \sum_{i\in I}a_i^2x_it_i - \frac{1}{2}\sum_{i\in I}\sum_{j\in I}x_ix_ja_ia_jv_{ij}.
\end{align*}
For every $i\in I$, the partial derivative is as follows:
\begin{align*}
    \frac{\partial\phi^Q}{\partial x_i} (x) &= a_i^2t_i - \sum_{j\in I}x_ja_ja_iv_{ij}\\
    &= a_i^2t_i - \sum_{j\in I}x_ja_i^2w_{ij}\\
    &= a_i^2\left(t_i - \sum_{j\in I}x_jw_{ij}\right)\\
    &= a_i^2\cdot \tilde b_i(x)
\end{align*}
We used that $v_{ij}= v_{ji}$ and that $a_ja_iv_{ij} = w_{ij}a_i^2$.

Also, we have 
\begin{align*}
\frac{\partial^2 \phi^Q}{\partial x_i^2}(x)=-a_i^2w_{ii}=-a_i^2 <0.
\end{align*}
Therefore, if $b_i(x)=\tilde b_i(x)$, then $\phi^Q(x)$ is maximal when $\tilde b_i(x)=b_i(x)=x_i$. If $\tilde b_i(x)<0$ for some $x_{-i}$, then the derivative of $\phi^Q$ with respect to $x_i$ is uniformly negative on $[0,\xh_i]$, so the maximum is achieved when $x_i=0$. On the other hand, when $\xh_i< \tilde b_i(x)$, then the derivative of $\phi^Q$ with respect to $x_i$ is positive on $[0,\xh_i]$, and hence it takes its maximum in $\xh_i$.

Therefore, for a fixed $x_{-i}$ vector the function $\phi^Q$ is maximal when player $i$ is in her best response, so $\phi^Q$ is a best-response potential function.
\end{proof}

\begin{proof}[Proof of 3 $\Rightarrow$ 2.]
Suppose that there exists a quadratic best-response potential function, $\phi^Q$, for a game $G$ on $W$. We show that the matrix $W$ is sign symmetric and transitive.

Let the potential function be given as follows.
\begin{align*}
    \phi^Q (x)= \sum_{i\in I} p_ix_i - \frac{1}{2}\sum_{i\in I} q_{ii}x_i^2 - \sum_{i,j\in I, i>j} q_{ij}x_ix_j,
\end{align*}
where $p_i\in \mathbb{R}$ for every $i\in I$ and $q_{ij} \in \mathbb{R}$ for every $i,j\in I$, $i\geq j$.
For $j>i$, we set $q_{ij} = q_{ji}$ for convenience. With this notation, the partial derivative of $\phi^Q$ is:
\begin{align}\label{eq: phiq}
    \frac{\partial\phi^Q}{\partial x_i} (x)= p_i - q_{ii}x_i - \sum_{j\in I\setminus \{i\}}q_{ij}x_j .
\end{align}

The function $\phi^Q$ is a best-response potential of the weighted network game $G$, so for every $x\in X$ and for every $i\in I$, the partial derivative of $\phi^Q$ is zero exactly when player $i$ is in her best response. Note that we have $0<t_i<\xh_i$ for every $i\in I$, this means that $t_i=b_i(0)=\tilde b_i(0)\in (0,\xh_i)$ for every $i\in I$. Therefore, there exists a neighborhood of $0$ where each player's unconstrained best response is equal to her best response. Let $\eps >0$ be so that for every $x\in [0,\eps]^n$ and for all $i\in I$ we have $b_i(x)=\tilde b_i(x)$. For a fixed $i$, the functions $\frac{\partial\phi^Q}{\partial x_i}(x)$ and $\tilde b_i(x)-x_i$ are both linear in $x$ and they have the same zero set when $x\in [0,\eps]^n$. Hence, they must be equal up to a constant factor: there exists $d_i\neq 0$ such that we have
\begin{align}\label{eq: pot=br}
    t_i - \sum_{j\in I} w_{ij} x_j &= \tilde b_i(x)-x_i =d_i\frac{\partial \phi^Q}{\partial x_i} (x)= d_i\left( p_i -\sum_{j\in I\setminus \{i\}} q_{ij}x_j -q_{ii}x_i\right)
\end{align}

Additionally, as we are maximizing $\phi^Q$, the second derivative of $\phi^Q$ with respect to $x_i$ has to be negative, so $q_{ii}>0$ for every $i\in I$. From (\ref{eq: pot=br}), we get the following for every $i,j\in I$.
\begin{align}
    t_i &= d_i\cdot p_i \nonumber\\
    w_{ii}=1 &= d_i\cdot q_{ii} \nonumber\\
    w_{ij} &= d_i\cdot q_{ij} \label{eq: third}
\end{align}
Since $q_{ii}>0$, we must have $d_i>0$ as well for all $i\in I$. 

The first two equations give no constraints for $W$. We have to show that if (\ref{eq: third}) holds for all $i,j\in I$, that implies the transitivity and the sign-symmetry of the weight matrix $W$.

Since the $d_i$'s are positive and $q_{ij}=q_{ji}$, we have
\[ \sgn(w_{ij})=\sgn(d_iq_{ij})=\sgn(q_{ij})=\sgn(q_{ji})=\sgn(d_jq_{ji}) =\sgn(w_{ji}),\]
so the matrix $W$ is sign-symmetric.

Now let $3\leq s\leq n$, then for all $i_1,i_2,...,i_s \in I$ pairwise distinct, we have 
\begin{align*}
	w_{i_1i_2}w_{i_2i_3}\dots w_{i_{s-1}i_s}w_{i_si_1} &= (d_{i_1}q_{i_1i_2})(d_{i_2}q_{i_2i_3})\dots (d_{i_{s-1}}q_{i_{s-1}i_s})(d_{i_s}q_{i_si_1})\\
	&= (d_{i_1}d_{i_2}\dots d_{i_s})(q_{i_1i_2}q_{i_2i_3}\dots q_{i_{s-1}i_s}q_{i_si_1})\\
	&= (d_{i_1}d_{i_2}\dots d_{i_s})(q_{i_2i_1}q_{i_3i_2}\dots q_{i_si_{s-1}}q_{i_1i_s})\\
	&= (d_{i_2}q_{i_2i_1})(d_{i_3}q_{i_3i_2})\dots (d_{i_s}q_{i_si_{s-1}})(d_{i_1}q_{i_1i_s})\\
	&= w_{i_2i_1}w_{i_3i_2}\dots w_{i_si_{s-1}}w_{i_1i_s}
\end{align*}
using $w_{ij} = d_iq_{ij}$ for every $i,j\in I$. Therefore, the matrix $W$ is transitive.
\end{proof}

\begin{proof}[Proof of 2 $\Rightarrow$ 1.]

Suppose that the matrix $W$ is transitive and sign-symmetric. We would like to find a scaling vector $a\in \mathbb{R}^n$, $a>0$, such that the rescaled matrix is symmetric, i.e., for every pair $i,j\in I$ we have $w_{ij}a_i/a_j=w_{ji}a_j/a_i$.

First, let us assume that the graph of $W$ is connected, so there exists a path between any two players. If the graph is not connected, we follow the described algorithm for every connected component of the graph separately in order to define the vector $a$.

We start by ordering the players in the following way. Choose player 1 arbitrarily. Let the neighbors of player 1 be $2,\dots, n_1$. Players $n_1+1, \dots, n_2$ are those neighbors of 2 that are not neighbors of 1, and so on: players $n_k+1, n_k+2,\dots, n_{k+1}$ are those neighbors of player $k+1$ that are not neighbors of players $1,2,\dots, k$. (If there are no such players, then we have $n_k=n_{k+1}$.) Due to the sign-symmetry of $W$, if two players are neighbors, there is a directed edge between them in both directions. Hence, since $W$ is connected, there exists a directed path from player 1 to every player, so after at most $n$ steps we reach all players.

Let $a_1=1$. We define all $a_j$'s for $j\geq2$ recursively in the following way. Suppose we have already defined $a_1, a_2,...,a_{j-1}$, and $n_{i-1} < j \leq n_i$, so player $j$ is a neighbor of $i$ but not of players $1,...,i-1$. In other words, $i$ is the neighbor of $j$ with the smallest index. Clearly $i<j$, so $a_i$ is already defined. Also, we have $w_{ji}\neq 0\neq w_{ij}$, since $i$ and $j$ are neighbors. Let
\begin{align}
    a_j = a_i \frac{\sqrtabs{w_{ij}}}{\sqrtabs{w_{ji}}}.
\end{align}

Now we show that with this scaling vector $a$, for any $k,\ell\in I$ we have $v_{k\ell} = v_{\ell k}$. Take any $k,\ell\in I$. If $k=\ell$, then $w_{k\ell}=v_{k\ell}=v_{\ell k}=w_{\ell k}=1$, so we can assume that $k\neq\ell$. Let $i_1\in I$ be so that $k$ is a neighbor of $i_1$, but not of players $1,\dots, i_1-1$, so $n_{i_1-1}< k\leq n_{i_1}$. Similarly, let $j_1\in I$ be so that $n_{j_1-1}< \ell \leq n_{j_1}$. For every $m\in \mathbb{N}$, we define $i_m$ and $j_m$ recursively:
\begin{align*}
i_{m+1} &=\min\{i\in I : w_{i_mi}\neq 0\}, \\
j_{m+1} &= \min\{j\in I : w_{j_mj}\neq 0\}.
\end{align*}
Both sequences are decreasing, and they both stabilize when they reach 1.
Let $r, s\in \mathbb{N}$ be minimal so that $i_r=j_s$. Note that this is going to happen, the latest when they are both equal to 1, and also note that from this point on, the two sequences coincide. Therefore, we have $k>i_1>\dots>i_r$ and $\ell>j_1>\dots >j_s$, and the numbers $k, i_1, \dots, i_{r-1}, \ell, j_1,\dots, j_{s}$ are distinct.

We can calculate $a_k$ as follows.
\begin{align*}
a_k &= a_{i_1}\frac{\sqrtabs{w_{i_1k}}}{\sqrtabs{w_{ki_1}}}\\
 &= a_{i_2}\frac{\sqrtabs{w_{i_2i_1}}}{\sqrtabs{w_{i_1i_2}}}\frac{\sqrtabs{w_{i_1k}}}{\sqrtabs{w_{ki_1}}}\\
 & \ \ \vdots \\
 &= a_{i_r}\frac{\sqrtabs{w_{i_ri_{r-1}}}}{\sqrtabs{w_{i_{r-1}i_r}}}\dots 
 \frac{\sqrtabs{w_{i_2i_1}}}{\sqrtabs{w_{i_1i_2}}}\frac{\sqrtabs{w_{i_1k}}}{\sqrtabs{w_{ki_1}}}\\
 &= a_{i_r}\left| \frac{ w_{i_ri_{r-1}}\dots w_{i_2i_1}w_{i_1k}}{w_{i_{r-1}i_r}\dots w_{i_1i_2}w_{ki_1}} \right|^{\frac{1}{2}}.
\end{align*}

Similarly, we have
\begin{align*}
a_{\ell}= a_{j_s}\left| \frac{ w_{j_sj_{s-1}}\dots w_{j_2j_1}w_{j_1\ell}}{w_{j_{s-1}j_s}\dots w_{j_1j_2}w_{\ell j_1}} \right|^{\frac{1}{2}}.
\end{align*}

Now we can compute $v_{k\ell}$ using that $i_r=j_s$.
\begin{align*}
v_{k\ell} &= \frac{a_k}{a_{\ell}}w_{k\ell}=a_k (a_{\ell})^{-1}w_{k\ell} \\
&= \left( a_{i_r}\left| \frac{ w_{i_ri_{r-1}}\dots w_{i_2i_1}w_{i_1k}}{w_{i_{r-1}i_r}\dots w_{i_1i_2}w_{ki_1}} \right|^{\frac{1}{2}} \right)
\left( a_{j_s}\left| \frac{ w_{j_sj_{s-1}}\dots w_{j_2j_1}w_{j_1\ell}}{w_{j_{s-1}j_s}\dots w_{j_1j_2}w_{\ell j_1}} \right|^{\frac{1}{2}} \right)^{-1} w_{k\ell}\\
&=\left| \frac{(w_{i_ri_{r-1}}\dots w_{i_2i_1}w_{i_1k})(w_{j_{s-1}j_s}\dots w_{j_1j_2}w_{\ell j_1})}{(w_{i_{r-1}i_r}\dots w_{i_1i_2}w_{ki_1})(w_{j_sj_{s-1}}\dots w_{j_2j_1} w_{j_1\ell})} \right|^{\frac{1}{2}} w_{k\ell}\\
&= \left| \frac{w_{i_1k} w_{i_2i_1}\dots w_{i_ri_{r-1}} w_{j_{s-1}j_s} \dots w_{j_1j_2} w_{\ell j_1}}{w_{ki_1}  w_{i_1i_2} \dots w_{i_{r-1}i_r}  w_{j_sj_{s-1}}\dots w_{j_2j_1} w_{j_1\ell} } \right|^{\frac{1}{2}}  \left| \frac{w_{k\ell}}{w_{\ell k}} \right|^{\frac{1}{2}} \cdot \sgn(w_{k\ell})\sqrtabs{w_{k\ell}w_{\ell k}}\\
&= \left| \frac{w_{i_1k} w_{i_2i_1}\dots w_{j_si_{r-1}} w_{j_{s-1}j_s} \dots w_{j_1j_2} w_{\ell j_1} w_{k\ell} }{w_{ki_1}  w_{i_1i_2} \dots w_{i_{r-1}j_s}  w_{j_sj_{s-1}}\dots w_{j_2j_1} w_{j_1\ell} w_{\ell k}} \right|^{\frac{1}{2}} \sgn(w_{k\ell})\sqrtabs{w_{k\ell}w_{\ell k}}
\end{align*} 
By the transitivity assumption for $k, i_1,\dots i_{r-1}, j_s, \dots j_1, \ell \in I$, we have
\[ w_{ki_1}  w_{i_1i_2} \dots w_{i_{r-1}j_s}  w_{j_sj_{s-1}}\dots w_{j_1\ell} w_{\ell k} = w_{i_1k} w_{i_2i_1}\dots w_{j_si_{r-1}} w_{j_{s-1}j_s} \dots w_{\ell j_1} w_{k\ell}, \]
and hence  $v_{k\ell}= \sgn(w_{k\ell}) \sqrtabs{w_{k\ell}w_{\ell k}}$. Similarly, we have $ v_{\ell k} =sgn(w_{\ell k}) \sqrtabs{w_{\ell k}w_{k\ell}}$. Since $W$ is sign-symmetric, these two numbers are equal, so $v_{k\ell}=v_{\ell k}$ for all $k,\ell \in I$.

Finally, we show that the statement holds even if the graph of $W$ is not connected. In this case we order the players in each component and define the $a_i$'s for the components separately as described in the beginning of the proof. Now take any $i,j\in I$. If $i$ and $j$ are the same connected component, we have already shown that $v_{ij} = v_{ji}$. If they are in different components, we know that $w_{ij} = w_{ji} = 0$, therefore $v_{ij} = 0 = v_{ji}$. This concludes the proof.
\end{proof}

\subsection*{Proofs for Section \ref{sec: weak}}

Let us introduce some notations and terminology that will be used in the proofs of this section.

Recall that a matrix norm $\|\cdot \|\colon \C^{m\times n}\to \R_+$ is an \emph{induced norm} if it is induced by vector norms on $\C^m$ and $\C^n$, i.e., there exist norms $\| \cdot \|_{\C^m}\colon \C^m\to \R_+$ and $\| \cdot \|_{\C^n}\colon \C^n\to \R_+$ such that for $M\in \C^{m\times n}$ we have
\[ \| M\|=\sup \{ \|Mx\|_{\C^m} : x\in \C^n, \|x\|_{\C^n}=1\}.\]
This definition implies that we have
\begin{align}
\|Mx\|_{\C^m} \leq \|M\| \cdot \|x\|_{\C^n} \label{eq: inducednorm}
\end{align}
for every $M\in \C^{m\times n}$ and $x\in \C^n$.

We will use a variation of the $\infty$-norm: the weighted maximum norm. Fix a weight vector $u\in \R^{n}$, $u>0$. For $x\in \C^n$, let
\[\|x\|_{\infty}^u=\max\{ |x_i|/u_i : 1\leq i\leq n\}.\]
The induced matrix norm is the following: for $M=(m_{ij})\in \C^{n\times n}$, we have
\begin{align}
\|M\|_{\infty}^u=\sup\{ \|Mx\|_{\infty}^u : \|x\|_{\infty}^u=1 \}=\max\left\{ \frac{1}{u_i}\sum_{j=1}^n u_j|m_{ij}| : 1\leq i\leq n\}\right\}.\label{eq: inftynorm}
\end{align}
For a vector $u\in \R^n$, $u>0$, let $u^{-1}\in \R^n$ denote the vector with entries $u_i^{-1}$. Let $D_u=\mathrm{diag}(u_1,u_2,\dots,u_n)\in \R^{n\times n}$, by (\ref{eq: inftynorm}), we have
\begin{align*}
\|M\|_{\infty}^u=\|D_u^{-1} MD_u\|_{\infty}.
\end{align*}
If $W$ is a network and $a\in \R^n$, $a>0$ is a scaling vector, then for the rescaled matrix $V$ we have $v_{ij}=w_{ij}a_i/a_j$. Equivalently, $V=D_aWD_a^{-1}$. Therefore, we have
\begin{align}
\|W\|_{\infty}^{a^{-1}}=\| D_{a^{-1}}^{-1} W D_{a^{-1}}\|_{\infty}= \| D_a WD_a^{-1}\|_{\infty}=\| V\|_{\infty}. \label{eq: WVnorm}
\end{align}
For a matrix $M=(m_{ij})_{1\leq i,j\leq n}\in \C^{n\times n}$, we will denote by $|M|\in \R_+^{n\times n}$ the matrix $(|m_{ij}|)_{1\leq i,j\leq n}$.

\noindent We will use the following statement.

\begin{proposition}[Perron-Frobenius Theorem]\label{pro: PerronFrob}
Let $M\in \R_+^{n\times n}$. Then, there exists a vector $z\geq 0$, $z\neq 0$ such that $Mz=\rho(M)z$.

Furthermore, for any $\eps>0$ there exists a vector $u>0$ such that $\rho(M)< \|M\|_{\infty}^{u} <\rho(M)+\eps$.
\end{proposition}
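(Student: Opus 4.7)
The plan is to handle the two assertions in sequence, reducing both to the classical Perron theorem for strictly positive matrices via the perturbation $M_\delta=M+\delta J$, where $J$ is the all-ones matrix and $\delta>0$ is small.

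For the first claim, I would first establish the result when $A>0$ (entrywise strict). On the standard simplex $\Delta=\{x\in\R^n_+:\sum_i x_i=1\}$, the map $T(x)=Ax/\sum_i(Ax)_i$ is well-defined (since $x\in\Delta$ and $A>0$ give $Ax>0$) and continuous, so by Brouwer's fixed point theorem there is $z\in\Delta$ with $Az=\lambda z$ for some $\lambda>0$; a Collatz--Wielandt argument identifies $\lambda=\rho(A)$ and shows $z>0$. To extend to $M\geq0$, let $z_\delta\in\Delta$ be the Perron eigenvector of $M_\delta>0$ with eigenvalue $\rho(M_\delta)$. Since $\rho(\cdot)$ depends continuously on matrix entries and $\Delta$ is compact, one can extract a subsequence $z_{\delta_k}\to z\in\Delta$ as $\delta_k\to0$; passing to the limit in $M_{\delta_k}z_{\delta_k}=\rho(M_{\delta_k})z_{\delta_k}$ gives $Mz=\rho(M)z$ with $z\geq0$ and $z\neq0$.

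For the second claim, fix $\varepsilon>0$ and again let $u=z_\delta>0$ be the Perron eigenvector of $M_\delta$, so $M_\delta u=\rho(M_\delta)u$. Because $M\geq0$ and $u>0$, formula (\ref{eq: inftynorm}) reduces to
\[
\|M\|_\infty^{u}=\max_{1\leq i\leq n}\frac{(Mu)_i}{u_i}.
\]
Since $Mu\leq M_\delta u=\rho(M_\delta)u$ componentwise, we obtain $\|M\|_\infty^{u}\leq\rho(M_\delta)$, and continuity of $\rho$ allows us to choose $\delta$ so small that $\rho(M_\delta)<\rho(M)+\varepsilon$. The lower bound $\rho(M)\leq\|M\|_\infty^{u}$ is automatic for any induced norm via (\ref{eq: inducednorm}) applied to an eigenvector attaining $\rho(M)$; the strict version follows because the inequality $Mu\leq M_\delta u$ is strict in at least one coordinate (the one where $(Ju)_i$ is maximal), so the weighted maximum cannot drop all the way to $\rho(M)$ unless $M=0$, which is a trivial case.

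The main obstacle is really packed into the Brouwer step and the continuity of the spectral radius under perturbation: both are classical, but they do the essential work, since once a strictly positive Perron eigenvector $u$ is available the norm bound is a one-line computation. A subsidiary care point is ensuring that the limit vector in the perturbation argument is not lost to the boundary; this is why the normalization $\|z_\delta\|_1=1$ and the compactness of $\Delta$ are used, rather than an unnormalized limit.
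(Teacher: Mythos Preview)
The paper does not prove this proposition; it simply cites Chapter~2, Proposition~6.6 of Bertsekas--Tsitsiklis. Your perturbation-plus-Brouwer route is a standard and correct way to obtain both the nonnegative Perron eigenvector and the approximating weighted $\infty$-norm, and it is self-contained in a way the paper's treatment is not.

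There is one genuine gap, however, in your handling of the strict lower bound $\rho(M)<\|M\|_\infty^u$. Your argument (``the inequality $Mu\leq M_\delta u$ is strict in at least one coordinate, so the weighted maximum cannot drop all the way to $\rho(M)$'') does not establish this: the strictness of $Mu<M_\delta u$ only yields $\|M\|_\infty^u<\rho(M_\delta)$, which says nothing about the gap between $\|M\|_\infty^u$ and $\rho(M)$. In fact the strict lower bound as stated cannot hold in general: for $M=cI_n$ with $c\geq 0$ one has $\|M\|_\infty^u=c=\rho(M)$ for \emph{every} $u>0$, so no choice of $u$ gives a strict inequality, and $M=0$ is not the only exception. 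The correct version is $\rho(M)\leq\|M\|_\infty^u<\rho(M)+\varepsilon$, and your construction with $u=z_\delta$ delivers exactly this. Fortunately, the paper only ever uses the upper bound (in the proof of Proposition~\ref{pro: weakspectral}), so the imprecision in the stated proposition is harmless downstream.
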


\noindent For a proof see Chapter~2, Proposition~6.6 of \cite{BertsekasTsitsiklis1989}.

\subsubsection*{Proposition \ref{pro: weakspectral}}

Now we can characterize the networks that can be rescaled into one with weak externalities.

\begin{proof}[Proof of Proposition \ref{pro: weakspectral}]
Let us start by proving the equivalence of \emph{2}, \emph{3}, and \emph{4}:

\noindent \emph{4}$\Rightarrow$ \emph{2}: Assume that $\rho(|W|-I_n)<1$. Then, by Proposition \ref{pro: PerronFrob} for $0<\eps<1-\rho(|W|-I_n)$, there exists a vector $u\in \R^n$, $u>0$ such that $\| |W|-I_n\|_{\infty}^{u}<1$. Let us use $u^{-1}=a$ as a scaling vector, i.e., let $v_{ij}=w_{ij}a_i/a_j$. Then, we have
\begin{align*}
1>\| |W|-I_n\|_{\infty}^{u}=\| |W|-I_n\|_{\infty}^{a^{-1}}=\| |V|-I_n\|_{\infty}=\| V-I_n\|_{\infty},
\end{align*}
so the matrix $V$ is row diagonally dominant. In other words, $V$ has weak influences.

\noindent \emph{2}$\Rightarrow$ \emph{3}: Let $a\in \R^n$, $a>0$ be a scaling vector so that the rescaled matrix $V$ has weak influences, i.e., $V$ is a row diagonally dominant matrix. Therefore, $\| |V|-I_n\|_{\infty}=\|V-I_n\|_{\infty}<1$, so by (\ref{eq: WVnorm}), we have 
\begin{align*}
\| |W|-I_n\|_{\infty}^{a^{-1}}=\| W-I_n\|_{\infty}^{a^{-1}}= \| V-I_n \|_{\infty}= \| |V|-I_n \|_{\infty}<1.
\end{align*}
For any induced matrix norm $\|\cdot \|$, we have $\| MN\|\leq \|M\|  \|N\|$ for any matrices $M$, $N$. Therefore, $\|M^k\|\leq \|M\|^k$ for any $k\in \N$ and any matrix $M$. Hence,
\[ \lim_{k\to \infty} \| (|W|-I_n)^k \|_{\infty}^{a^{-1}}\leq \lim_{k\to \infty} \left( \| |W|-I_n \|_{\infty}^{a^{-1}}\right)^k=0,\]
since $\| |W|-I_n\|_{\infty}^{a^{-1}}<1$. The norm of $(|W|-I_n)^k$ converges to $0$, this is only possible if the matrices converge to the $0$ matrix, so we have $\lim_{k\to\infty} (|W|-I_n)^k=0$.

\noindent \emph{3}$\Rightarrow$ \emph{4}: Assume that the limit is 0. Let $\lambda$ be any eigenvalue of $|W|-I_n$, and $z\neq 0$ the corresponding eigenvector. Note that $z$ is also an eigenvector of $(|W|-I_n)^k$ with eigenvalue $\lambda^k$. We have
\begin{align*}
0=\left(\lim_{k\to \infty} (|W|-I_n)^k\right) z = \lim_{k\to\infty} (|W|-I_n)^kz=\lim_{k\to\infty} \lambda^kz = \left(\lim_{k\to\infty}\lambda^k\right) z.
\end{align*}
Since $z\neq 0$, we must have $\lim_{k\to\infty} \lambda^k=0$, hence $|\lambda|<1$. This is true for any eigenvalue, so $\rho(|W|-I_n)<1$.

Hence, conditions \emph{2}, \emph{3}, and \emph{4} are equivalent. Now notice that for any matrix $M$, we have $\lim_{k\to\infty} M^k=0$ if and only if $\lim_{k\to \infty} (M^{\top})^k=0$. Therefore, we have the following equivalences: the network $W$ can be rescaled into a row diagonally dominant matrix $\Leftrightarrow$ $\lim_{k\to\infty}(|W|-I_n)^k=0$ $\Leftrightarrow$ $\lim_{k\to\infty}((|W|-I_n)^{\top})^k=0$ $\Leftrightarrow$ $W^{\top}$ can be rescaled into a row diagonally dominant matrix $\Leftrightarrow$ $W$ can be rescaled into a column diagonally dominant matrix.

Column diagonal dominance means exactly that the network has weak externalities, so we proved the equivalence of \emph{1} with the other three statements.
\end{proof}

\subsubsection*{Proposition \ref{pro: weaknash}}

\begin{lemma}\label{le: contraction}
Recall that $b\colon X\to X$ denotes the best response mapping. For any vector $u\in \R^n$, $u>0$ and for all $x,x'\in X$, we have 
\[ \|b(x)-b(x') \|_{\infty}^u \leq \| W-I_n \|_{\infty}^u \|x-x'\|_{\infty}^u.\]
\end{lemma}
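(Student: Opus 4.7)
The plan is to factor the bound through the \emph{unconstrained} best response $\tilde b$, for which the dependence on $x$ is manifestly linear, and then use the fact that the actual best response $b_i$ is obtained from $\tilde b_i$ by truncation to $[0,\xh_i]$, an operation that is nonexpansive in each coordinate.

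First I would rewrite the unconstrained best response in vector form. Using $w_{ii}=1$ and equation (\ref{eq: unconstrained}), for every $x\in X$,
\[
\tilde b(x) \;=\; t - (W-I_n)x.
\]
Consequently, for any $x,x'\in X$,
\[
\tilde b(x)-\tilde b(x') \;=\; -(W-I_n)(x-x'),
\]
so the claim for $\tilde b$ in place of $b$ is immediate from (\ref{eq: inducednorm}):
\[
\|\tilde b(x)-\tilde b(x')\|_\infty^u \;\le\; \|W-I_n\|_\infty^u\,\|x-x'\|_\infty^u.
\]

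Next I would pass from $\tilde b$ to $b$. By Lemma \ref{le: br3}, $b_i(x)$ equals $\tilde b_i(x)$ clipped to the interval $[0,\xh_i]$, i.e.\ $b_i(x)=\Pi_{[0,\xh_i]}(\tilde b_i(x))$, where $\Pi_{[0,\xh_i]}$ denotes the metric projection onto the closed interval. Projection onto a closed convex subset of $\R$ is $1$-Lipschitz, so
\[
|b_i(x)-b_i(x')| \;\le\; |\tilde b_i(x)-\tilde b_i(x')|
\]
for every coordinate $i\in I$. Dividing by $u_i>0$ and taking the maximum over $i$,
\[
\|b(x)-b(x')\|_\infty^u \;\le\; \|\tilde b(x)-\tilde b(x')\|_\infty^u.
\]
Chaining this with the previous inequality yields the desired bound. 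The computation is routine; the only point that requires a small observation is the coordinatewise nonexpansiveness of the clipping step, which is what forces us to go through $\tilde b$ rather than trying to differentiate $b$ directly across the kinks of the piecewise definition.
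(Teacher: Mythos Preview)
Your proof is correct and follows essentially the same route as the paper: bound $\|\tilde b(x)-\tilde b(x')\|_\infty^u$ via the linear formula $\tilde b(x)=t-(W-I_n)x$ and the induced-norm inequality, then pass to $b$ using that clipping to $[0,\xh_i]$ is coordinatewise nonexpansive. The only cosmetic difference is that the paper verifies $|b_i(x)-b_i(x')|\le|\tilde b_i(x)-\tilde b_i(x')|$ by an explicit case split over the positions of $\tilde b_i(x),\tilde b_i(x')$ relative to $0$ and $\xh_i$, whereas you invoke the $1$-Lipschitz property of the metric projection onto a closed interval; these are the same fact.
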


\begin{proof}
First, consider the unconstrained best responses $\tilde b(x)$ and $\tilde b(x')$. We have
\begin{align*}
\|\bt(x)-\bt(x') \|_{\infty}^u &=\| (t-(W-I_n)x)- (t-(W-I_n)x')\|_{\infty}^u\\
&= \| (W-I_n)(x'-x) \|_{\infty}^u\\
&\leq \| W-I_n\|_{\infty}^u \|x-x'\|_{\infty}^u \qquad \text{ by (\ref{eq: inducednorm}).}
\end{align*}
Notice that for every $i\in I$, we have $|b_i(x)-b_i(x')|\leq |\bt_i(x)-\bt_i(x')|$. Indeed, without loss of generality, we can assume that $\bt_i(x)\leq \bt_i(x')$ and we can verify the inequality in all cases:
\begin{itemize}[nolistsep]
\item If $\bt_i(x)\leq \bt_i(x')\leq 0$, then $b_i(x)=b_i(x')=0$, so $|b_i(x)-b_i(x')|=0 \leq |\bt_i(x)-\bt_i(x')|$.
\item If $\xh_i\leq \bt_i(x)\leq \bt_i(x')$, then $b_i(x)=b_i(x')=\xh_i$, so $|b_i(x)-b_i(x')|=0 \leq |\bt_i(x)-\bt_i(x')|$.
\item If $\bt_i(x)\leq 0\leq \bt_i(x')$ or if $\bt_i(x)\leq \xh_i \leq \bt_i(x')$, then $\bt_i(x)\leq b_i(x)\leq b_i(x')\leq \bt_i(x')$, and hence $|b_i(x)-b_i(x')|\leq |\bt_i(x)-\bt_i(x')|$.
\item If $0\leq \bt_i(x)\leq \bt_i(x')\leq \xh_i$, then $b_i(x)=\bt_i(x)$ and $b_i(x')=\bt_i(x')$, so $|b_i(x)-b_i(x')|= |\bt_i(x)-\bt_i(x')|$.
\end{itemize}
Therefore, we have
\begin{align*}
\|b_i(x)-b_i(x')\|_{\infty}^u &=\sum_{i\in I} |b_i(x)-b_i(x')|/u_i\\
&\leq \sum_{i\in I} |\bt_i(x)-\bt_i(x')|/u_i\\
&=\|\bt(x)-\bt(x')\|_{\infty}^u\\
&\leq \| W-I_n\|_{\infty}^u \|x-x'\|_{\infty}^u.
\end{align*}
This concludes the proof of the lemma.
\end{proof}

\begin{proof}[Proof of Proposition \ref{pro: weaknash}]
By Proposition \ref{pro: weakspectral}, there exists a scaling vector $a\in\R^n$, $a>0$ such that the rescaled matrix is row diagonally dominant, i.e., we have $\| W-I_n\|_{\infty}^{a^{-1}}=\| V-I_n\|_{\infty}<1$.

Assume that $x^*$ and $x^{**}$ are two different Nash equilibria, i.e., we have $b(x^*)=x^*$, $b(x^{**})=x^{**}$ and $x^*\neq x^{**}$. By the assumption that $\| W-I_n\|_{\infty}^{a^{-1}}<1$ and by Lemma \ref{le: contraction}, we get
\begin{align*}
\| x^*-x^{**} \|_{\infty}^{a^{-1}}=\| b(x^*)-b(x^{**})\|_{\infty}^{a^{-1}}\leq \| W-I_n\|_{\infty}^{a^{-1}} \| x^*-x^{**}\|_{\infty}^{a^{-1}} < \|x^*-x^{**}\|_{\infty}^{a^{-1}}
\end{align*}
which is a contradiction. Hence, there exists a unique Nash equilibrium.
\end{proof}

\subsubsection*{Theorem \ref{thm: weak}}

\begin{proof}[Proof of Theorem \ref{thm: weak}.]
By Proposition \ref{pro: weakspectral}, there exists a scaling vector $a\in\R^n$, $a>0$ such that the rescaled matrix is row diagonally dominant, i.e., we have $\| W-I_n\|_{\infty}^{a^{-1}}=\| V-I_n\|_{\infty}<1$. Let $\gamma=\|W-I_n\|_{\infty}^{a^{-1}}$, then $0\leq \gamma <1$.

By Proposition \ref{pro: weaknash}, there is a unique Nash equilibrium, let us denote it by $x^*$. Notice that a BRD is a BRAD with parameter $0$, so it is enough to prove the statement for BRAD's. Consider any BRAD $(x^k)_{k\in \N}$ with approach parameter $0\leq \beta<1$.

By the definition of the weighted maximum norm, for every $x, y\in X$ we have 
\begin{align}
a_i |x_i -y_i|\leq \|x-y\|_{\infty}^{a^{-1}}. \label{eq: weakthm1}
\end{align}
Hence, for an arbitrary $k\in \N$, we have
\begin{align}
a_{i^k} | b_{i^k}(x^k)-x_{i^k}^*| &= a_{i^k} |b_{i^k}(x^k) - b_{i^k}(x^*) | \nonumber \\
&\leq \| b(x^k)-b(x^*) \|_{\infty}^{a^{-1}} \qquad \text{ by (\ref{eq: weakthm1})} \nonumber \\
&\leq \|W-I_n\|_{\infty}^{a^{-1}} \|x^k-x^*\|_{\infty}^{a^{-1}} \qquad \text{ by Lemma \ref{le: contraction}} \nonumber \\
&=\gamma \cdot \|x^k-x^*\|_{\infty}^{a^{-1}}. \label{eq: weakthm2}
\end{align}

For the next claim, note that since $\gamma +\beta-\gamma\beta = 1-(1-\gamma)(1-\beta)$, we have $0\leq \gamma+\beta-\gamma\beta <1$.

\begin{claim}\label{claim: weak1}
For every $k\in \N$, we have 
\begin{align*}
a_{i^k} |x_{i^k}^{k+1}-x_{i^k}^* |\leq (\gamma+\beta -\gamma\beta) \| x^k - x^* \|_{\infty}^{a^{-1}}.
\end{align*}
\end{claim}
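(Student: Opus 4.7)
The plan is to exploit the BRAD structure to express $x_{i^k}^{k+1}$ as a convex combination of $x_{i^k}^k$ and $b_{i^k}(x^k)$, then combine the triangle inequality with the contraction estimate (\ref{eq: weakthm2}) and the coordinatewise bound (\ref{eq: weakthm1}).

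First, I would unpack the definition of BRAD. The sign condition $\sgn(x_{i^k}^{k+1}-b_{i^k}(x^k))=\sgn(x_{i^k}^k-b_{i^k}(x^k))$ together with $|x_{i^k}^{k+1}-b_{i^k}(x^k)|\leq \beta|x_{i^k}^k-b_{i^k}(x^k)|$ forces $x_{i^k}^{k+1}$ to lie on the closed segment joining $x_{i^k}^k$ and $b_{i^k}(x^k)$, on the same side of $b_{i^k}(x^k)$ as $x_{i^k}^k$, and within a factor $\beta$ of $b_{i^k}(x^k)$. Hence there exists $\lambda\in[0,\beta]$ with
\[
x_{i^k}^{k+1}=\lambda\, x_{i^k}^{k}+(1-\lambda)\, b_{i^k}(x^k).
\]

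Next I would subtract $x_{i^k}^*$ from both sides, apply the triangle inequality, multiply through by $a_{i^k}$, and invoke (\ref{eq: weakthm1}) on the term $a_{i^k}|x_{i^k}^k-x_{i^k}^*|$ and (\ref{eq: weakthm2}) on the term $a_{i^k}|b_{i^k}(x^k)-x_{i^k}^*|$. This yields
\[
a_{i^k}|x_{i^k}^{k+1}-x_{i^k}^*|\leq \bigl(\lambda+(1-\lambda)\gamma\bigr)\,\|x^k-x^*\|_{\infty}^{a^{-1}}.
\]
Finally, since $1-\gamma\geq 0$ and $\lambda\leq \beta$, the coefficient $\gamma+\lambda(1-\gamma)$ is bounded above by $\gamma+\beta(1-\gamma)=\gamma+\beta-\gamma\beta$, giving the claim.

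The calculation is essentially routine once the convex-combination representation is in place; the only real subtlety is justifying that representation from the sign-plus-magnitude conditions defining BRAD, so I would spend care on that first step. Everything else is a direct application of already-established inequalities, and no new use of the spectral condition is needed beyond what is packaged into $\gamma<1$.
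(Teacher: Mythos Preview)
Your proof is correct. Both your argument and the paper's rest on the same two ingredients, (\ref{eq: weakthm1}) and (\ref{eq: weakthm2}), and both begin by recognizing that the BRAD conditions force $x_{i^k}^{k+1}$ to lie on the segment $[b_{i^k}(x^k),\,(1-\beta)b_{i^k}(x^k)+\beta x_{i^k}^k]$. The difference is in how the final bound is extracted. The paper proceeds geometrically: it embeds that segment inside the $\beta$-contraction of $[x_{i^k}^*-D,\,x_{i^k}^*+D]$ about $b_{i^k}(x^k)$, then computes the worst-case endpoints by taking $b_{i^k}(x^k)$ at the extremes $x_{i^k}^*\pm\gamma D$. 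You instead write $x_{i^k}^{k+1}=\lambda x_{i^k}^k+(1-\lambda)b_{i^k}(x^k)$ explicitly, subtract $x_{i^k}^*$, and apply the triangle inequality termwise, which collapses the whole computation into a single line and makes the monotonicity in $\lambda$ transparent. Your route is shorter and more algebraic; the paper's is more visual but requires tracking two extremal cases. Either way the constant $\gamma+\beta-\gamma\beta$ falls out identically.
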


\begin{proof}
Let us use the notation $D=a_{i^k}^{-1}\|x^k-x^*\|_{\infty}^{a^{-1}}\in \R_+$. By (\ref{eq: weakthm1}), we have $|x_{i^k}-x_{i^k}^*|\leq a_{i^k}^{-1}\|x^k-x^*\|_{\infty}^{a^{-1}}=D$, and hence $x_{i^k}^k$ is contained in the interval of lenght $2D$ with midpoint $x_{i^k}^*$. By (\ref{eq: weakthm2}), $b_{i^k}(x^k)$ is in the interval of length $\gamma 2D$ centered at $x_{i^k}^*$.

For $p,q\in \R$, let us use the notation $[p,q]=[\min\{p,q\},\ \max\{p,q\}]$ for the interval between $p$ and $q$.

From the definition of BRAD, we have $x_{i^k}^{k+1}\in [b_{i^k}(x^k),\ (1-\beta)b_{i^k}(x^k)+\beta x_{i^k}^k]$, since this is the $\beta$-contracted image of $[b_{i^k}(x^k),\ x_{i^k}^k]$ towards $b_{i^k}(x^k)$. This implies that $x_{i^k}^{k+1}$ is contained in the $\beta$-contracted image of $[x_{i^k}^*-D,\ x_{i^k}^*+D]$ around $b_{i^k}(x^k)$, which is a point of $[x_{i^k}^*-\gamma D,\ x_{i^k}^*+\gamma D]$. Hence, we get the worst upper bound for $x_{i^k}^{k+1}$ if $b_{i^k}(x^k)$ takes the maximal value in $[x_{i^k}^*-\gamma D,\ x_{i^k}^*+\gamma D]$, and the worst lower bound if $b_{i^k}(x^k)$ takes the minimal value in the interval. We can compute these bounds: if $b_{i^k}(x^k)=x_{i^k}^*+\gamma D$, then the contracted image of $x_{i^k}^*+D$ is 
\[ (1- \beta)(x_{i^k}^* +\gamma D) + \beta(x_{i^k}^* +D)= x_{i^k}^*+ (\gamma+\beta-\gamma\beta)D.\]
Similarly, for $b_{i^k}(x^k)=x_{i^k}^*-\gamma D$ we get the lower bound
\[ (1-\beta)(x_{i^k}^*-\gamma D) + \beta(x_{i^k}^* - D)=  x_{i^k}^*- (\gamma+\beta-\gamma\beta)D.\]
Therefore, $x_{i^k}^{k+1}\in [x_{i^k}^*- (\gamma+\beta-\gamma\beta)D, \ x_{i^k}^*+ (\gamma+\beta-\gamma\beta)D]$, and hence
\[ a_{i^k} |x_{i^k}^{k+1}-x_{i^k}^*| \leq (\gamma+\beta-\gamma\beta)\| x^k-x^*\|_{\infty}^{a^{-1}},\]
as desired.
\end{proof}

\begin{claim}\label{claim: weak2}
For every $m\in \N$ there exists $K(m)\in \N$ such that for all $k\geq K(m)$ we have
\[\|x^k-x^*\|_{\infty}^{a^{-1}}\leq (\gamma+\beta-\gamma\beta)^m \|x^0-x^*\|_{\infty}^{a^{-1}}.\]
\end{claim}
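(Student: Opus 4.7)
I would prove Claim \ref{claim: weak2} by induction on $m$, exploiting the fact (from Definition \ref{def: frequently}) that every player updates infinitely often, together with Claim \ref{claim: weak1}.

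The base case $m=0$ is trivial with $K(0)=0$. For the inductive step, assume $K(m)$ satisfies the claim. Since every player receives revision opportunities infinitely often, I can choose $K(m+1)>K(m)$ large enough so that for every player $i\in I$ there exists a time $k_i$ with $K(m)\leq k_i<K(m+1)$ and $i^{k_i}=i$; let $k_i$ be the last such time. By Claim \ref{claim: weak1} applied at $k_i$,
\[
a_i\,|x_i^{k_i+1}-x_i^*|\leq (\gamma+\beta-\gamma\beta)\,\|x^{k_i}-x^*\|_\infty^{a^{-1}},
\]
and by the inductive hypothesis (since $k_i\geq K(m)$),
\[
\|x^{k_i}-x^*\|_\infty^{a^{-1}}\leq (\gamma+\beta-\gamma\beta)^m\,\|x^0-x^*\|_\infty^{a^{-1}}.
\]
Because player $i$ does not revise on the interval $(k_i,K(m+1))$, her coordinate is frozen: $x_i^{K(m+1)}=x_i^{k_i+1}$. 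Combining gives $a_i\,|x_i^{K(m+1)}-x_i^*|\leq (\gamma+\beta-\gamma\beta)^{m+1}\|x^0-x^*\|_\infty^{a^{-1}}$. Taking the maximum over $i\in I$ yields the desired bound at time $K(m+1)$.

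To extend this to every $k\geq K(m+1)$, I would establish the auxiliary monotonicity fact: if $\|x^k-x^*\|_\infty^{a^{-1}}\leq C$ then $\|x^{k+1}-x^*\|_\infty^{a^{-1}}\leq C$. Indeed, all coordinates $i\neq i^k$ satisfy $x_i^{k+1}=x_i^k$, so their weighted deviations are unchanged; and by Claim \ref{claim: weak1} the coordinate $i^k$ contracts: $a_{i^k}|x_{i^k}^{k+1}-x_{i^k}^*|\leq (\gamma+\beta-\gamma\beta)\,C\leq C$, since $\gamma+\beta-\gamma\beta<1$. Iterating this one-step monotonicity from $K(m+1)$ gives the bound for all $k\geq K(m+1)$, closing the induction.

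The only subtlety is the pigeon-hole step choosing $K(m+1)$: this is where the hypothesis that every player belongs to the (infinite) set $K^i$ is actually used. The rest follows mechanically from Claim \ref{claim: weak1} combined with the fact that a one-sided dynamic freezes all but one coordinate at each step, which is exactly what makes the weighted max norm compatible with the coordinate-by-coordinate contraction. I do not anticipate any real obstacle; the argument is essentially a Gauss-Seidel-type contraction bookkeeping adapted to arbitrary revision orders.
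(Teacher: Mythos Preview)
Your proposal is correct and follows essentially the same approach as the paper: induction on $m$, with the inductive step combining Claim~\ref{claim: weak1} and the hypothesis to bound each coordinate after that player has revised, then using that a one-sided dynamic freezes non-updating coordinates. The only cosmetic differences are that the paper picks $k_i$ as the \emph{first} revision time of player $i$ after $K(m-1)$ and defines $K(m)=\max_i k_i+1$, whereas you fix $K(m+1)$ first and take $k_i$ to be the \emph{last} revision before it; and the paper folds your separate monotonicity step into the sentence ``it is true after every move of player $i$, and it remains true in all other players' turns.''
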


\begin{proof}
We prove the statement by induction on $m$. For $m=0$, it clearly holds with $K(0)=0$. Assume that $K(m-1)$ exists, and we would like to find $K(m)$.

Take an arbitrary $k\geq K(m-1)$ and let player $i$ be the one who moves at time $k$. Then, we have
\begin{align*}
a_i| x_i^{k+1}-x_i^*| &\leq (\gamma+\beta-\gamma\beta) \|x^k-x^*\|_{\infty}^{a^{-1}} \qquad \text{ by Claim \ref{claim: weak1}}\\
&\leq (\gamma+\beta-\gamma\beta) (\gamma+\beta-\gamma\beta)^{m-1} \|x^0-x^*\|_{\infty}^{a^{-1}} \quad \text{ by ind.~hypothesis}\\
&= (\gamma+\beta-\gamma\beta)^m \|x^0-x^*\|_{\infty}^{a^{-1}}.
\end{align*}
Therefore, we can see that $a_i|x_i^{\ell}-x_i^*|\leq \gamma^m \|x^0-x^*\|_{\infty}^{a^{-1}}$ for all $\ell>k$, since it is true after every move of player $i$, and it remains true in all other players' turns because that does not change the action of player $i$.

For every $i\in I$, let $k_i$ be the first time player $i$ moves after $K(m-1)$. Let 
\[K(m)=\max \{k_i : i\in I\} +1.\]
By time $K(m)$, every player moved at least once since $K(m-1)$, so for every $i\in I$ and all $k\geq K(m)$, we have $a_i|x_i^k-x_i^*|\leq (\gamma+\beta-\gamma\beta)^m \|x^0-x^*\|_{\infty}^{a^{-1}}$. Therefore, we also have
\begin{align*}
\|x^k-x^*\|_{\infty}^{a^{-1}} =\max \{a_i|x_i^k-x_i^*| : i\in I\} \leq (\gamma+\beta-\gamma\beta)^m \|x^0-x^*\|_{\infty}^{a^{-1}}.
\end{align*}
This proves the statement for every $m\in \N$.
\end{proof}

Now we can show the convergence of the BRAD $(x^k)_{k\in \N}$ to the Nash equilibrium $x^*$. Take any $\eps>0$, then there exists $m\in \N$ such that $(\gamma+\beta-\gamma\beta)^m \|x^0-x^*\|_{\infty}^{a^{-1}}<\eps$, since $\gamma+\beta-\gamma\beta<1$. Therefore, if $k\geq K(m)$ from Claim \ref{claim: weak2}, then we have 
\begin{align*}
\|x^k-x^*\|_{\infty}^{a^{-1}}\leq (\gamma+\beta-\gamma\beta)^m \|x^0-x^*\|_{\infty}^{a^{-1}}<\eps.
\end{align*}
Hence, $(x^k)_{k\in\N}$ converges to the Nash equilibrium $x^*$.
\end{proof}

\subsubsection*{Proposition \ref{pro: weakpot}}

\begin{lemma}\label{le: rescaleweak}
If the network $W$ can be rescaled into a network with weak externalities using the vector $a\in \mathbb{R}^n$, $a>0$, then for every $j\in I$, we have $\sum_{i\in I\setminus \{j\} } a_i|w_{ij}| < a_j$.
\end{lemma}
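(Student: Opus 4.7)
The plan is to unpack the definition of rescaling with vector $a$ together with the definition of weak externalities and see that the claimed inequality is essentially a restatement. Concretely, by construction the rescaled weights are $v_{ij}=w_{ij}a_i/a_j$, and the hypothesis that $V$ has weak externalities says that for every $j\in I$,
\[
\sum_{i\in I\setminus\{j\}} |v_{ij}| < 1.
\]
Substituting $v_{ij}=w_{ij}a_i/a_j$ and using $a_i,a_j>0$ so that $|a_i/a_j|=a_i/a_j$, this becomes
\[
\sum_{i\in I\setminus\{j\}} \frac{a_i}{a_j}\,|w_{ij}| < 1.
\]
Multiplying both sides by $a_j>0$ yields $\sum_{i\in I\setminus\{j\}} a_i |w_{ij}| < a_j$, which is the desired conclusion.

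Since the argument is a one-line manipulation, there is no real obstacle; the only point worth being careful about is that $a_i/a_j$ is positive (so taking absolute values commutes with the rescaling) and that $a_j>0$ (so multiplying through preserves the strict inequality). Both are guaranteed by the assumption $a>0$.
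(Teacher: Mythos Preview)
Your proof is correct and essentially identical to the paper's own argument: both substitute $v_{ij}=w_{ij}a_i/a_j$ into the weak-externalities inequality $\sum_{i\neq j}|v_{ij}|<1$ and multiply through by $a_j>0$.
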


\begin{proof}
Take a vector $a\in\mathbb{R}^n$, $a>0$ such that the rescaled network $V$ is with weak externalities. By definition, this means that for every $j\in I$, we have
\begin{align*}
\sum_{i\in I\setminus \{j\} } |v_{ij}| &< |v_{jj}|=1 \\
\sum_{i\in I\setminus \{j\} } \frac{a_i}{a_j}|w_{ij}| &< |w_{jj}|=1 \\
\sum_{i\in I\setminus \{j\} } a_i|w_{ij}| &< a_j.
\end{align*}
\end{proof}

\begin{lemma}
\label{le: effectonbr}
Let $i,j\in I$ and let $(x^k)_{k\in\mathbb{N}}$ be a best-response dynamic. If player $i$'s action changes by $\Delta$, player $j$'s best response changes by a maximum of $| \Delta \cdot w_{ji}|$, i.e., we have $|b_j(x^k)-b_j(x^{k+1})| \leq  |w_{ji}|\cdot | x_i^k-x_i^{k+1}|$ for any $k\in\mathbb{N}$.
\end{lemma}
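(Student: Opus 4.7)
The plan is to reduce the claim to the corresponding (much easier) bound on the \emph{unconstrained} best response and then exploit the fact that truncation to $[0,\xh_j]$ is non-expansive. By Lemma \ref{le: br3}, I can write $b_j(x)=P_j(\bt_j(x))$, where $P_j(y)=\max\{0,\min\{\xh_j,y\}\}$ denotes projection onto $[0,\xh_j]$ and
\[\bt_j(x)=t_j-\sum_{\ell\in I\setminus\{j\}} w_{j\ell}x_\ell\]
is affine in $x$.

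First I would dispense with the degenerate cases using the one-sidedness of the dynamic. Since $(x^k)_{k\in\N}$ is a one-sided dynamic, only the action of the player $i^k$ changes between periods $k$ and $k+1$. If $i\neq i^k$, then $x_i^{k+1}=x_i^k$ and both sides of the desired inequality are $0$; if $i=i^k=j$, then $x_{-j}^{k+1}=x_{-j}^k$, so $b_j(x^{k+1})=b_j(x^k)$ and the inequality again holds trivially.

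In the remaining case $j\neq i=i^k$, the coordinates $x_\ell$ with $\ell\neq i$ are unchanged, so plugging into the affine formula for $\bt_j$ gives
\[\bt_j(x^{k+1})-\bt_j(x^k) = -w_{ji}(x_i^{k+1}-x_i^k),\]
hence $|\bt_j(x^{k+1})-\bt_j(x^k)| = |w_{ji}|\cdot|x_i^{k+1}-x_i^k|$. The final step is to note that $P_j$ is $1$-Lipschitz on $\R$, being the projection onto a closed interval; combining these observations yields
\[|b_j(x^{k+1})-b_j(x^k)| = |P_j(\bt_j(x^{k+1}))-P_j(\bt_j(x^k))| \leq |\bt_j(x^{k+1})-\bt_j(x^k)| = |w_{ji}|\cdot|x_i^k-x_i^{k+1}|,\]
as required.

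There is no real obstacle: the lemma is essentially a direct consequence of the affine form of $\bt_j$ and the non-expansiveness of truncation. The only point requiring justification is the $1$-Lipschitz property of $P_j$, which is either standard or can be verified by the same four-case argument already performed in the proof of Lemma \ref{le: contraction} (comparing $\bt_i(x)$ and $b_i(x)$), so I would either cite that argument or repeat it compactly.
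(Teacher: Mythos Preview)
Your proof is correct and follows essentially the same idea as the paper's: reduce to the affine unconstrained best response $\bt_j$ and then control the effect of truncation to $[0,\xh_j]$. The paper phrases the truncation step via piecewise partial derivatives and the mean value theorem, whereas you phrase it as the $1$-Lipschitz property of the projection $P_j$; your formulation is arguably cleaner since it sidesteps the non-differentiability of $b_j$ at the kink points, but the two arguments are the same in substance.
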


\begin{proof}
By (\ref{eq: unconstrained}), the unconstrained best response function of player $j$ is $\bt_j(x) = t_j - \sum_{i\in I\setminus \{j\} } w_{ji}x_i$. Therefore, if  $\bt_j(x)\in(0, \xh_j)$, then we have $\bt_j(x) = b_j(x)$, so $\partial b_j(x) / \partial x_i = w_{ji}$. If $\bt_j(x)\notin[0, \xh_j]$, then we know that $b_j(x) \in \{0, \xh_j\}$, so $\partial b_j(x) / \partial x_i = 0$. Hence, by the mean value theorem, if player $j$'s action changes by $\Delta$, her best response can change by at most $|\Delta\cdot w_{ji}|$.
\end{proof}

\begin{proof}[Proof of Proposition \ref{pro: weakpot}.]
Let $i\in I$ and fix $x\in X$. We need to prove that only $b_i(x)$ maximizes $\phi'(\cdot , x_{-i})$.
Assume that we have $x_i^1\in \argmax_{x_i\in X_i} \phi'(x_i,x_{-i})$. Let $x^1=(x_i^1,x_{-i})$ and $i^1=i$. Then $x^2=(b_i(x^1),x_{-i})=(x^2_j)_{j\in I}$.
We have that $\phi'(x^1)\geq \phi'(x^2)$ since $x_i^1\in \argmax_{x_i\in X_i} \phi'(x_i,x_{-i}^1)$. Therefore, we have
\begin{align*}
0&\geq \phi'(x^2)-\phi'(x^1)\\
&= -\left(  \sum_{j\in I\setminus \{i\} } a_j |x^{2}_j-b_j(x^{2})| \right)- \left( - \sum_{j\in I} a_j |x^1_j-b_j(x^1)| \right)\\
&= a_i |x^1_i-b_i(x^1)| +\sum_{j\in I\setminus \{i\} } a_j \left( |x^1_j-b_j(x^1)|-|x^1_j-b_j(x^{2})|\right)\\
&\geq a_i |x^1_i-b_i(x^1)| -\sum_{j\in I\setminus \{i\} } a_j |b_j(x^1)-b_j(x^{2})| \\
&\geq a_i |x^1_i-b_i(x^1)| -\sum_{j\in I\setminus \{i\} } a_j |w_{ji}| |x_i^1-x_i^{2}| \qquad \text{by Lemma \ref{le: effectonbr}}\\
&= \left( a_i -\sum_{j\in I\setminus \{i\} } a_j |w_{ji}| \right) |x_i^1-b_i(x^1)|\\
&\geq 0  \qquad\text{ by Lemma \ref{le: rescaleweak}.}
\end{align*}
Hence, we must have equality everywhere. Since $a_i -\sum_{j\in I\setminus \{i\} } a_j |w_{ji}|>0$, equality holds in the last line if and only if $|x_i^1-b_i(x^1)|=0$, i.e., iff $x_i^1=b_i(x^1)=b_i(x)$. Thus, we have $ \argmax_{x_i\in X_i} \phi'(x_i,x_{-i})= \{b_i(x)\}$, so $\phi'$ is a best-response potential function.
\end{proof}

\subsection{Cycles in large networks with random weights}
In this part we show that if the (ex-ante) possibility of one-way or parasitic interactions exists in an interaction network, even if the probabilities are small, then large networks will have a best-response cycle almost surely. The intuition for this is quite simple: With randomized player interactions directed cycles or parasitic links producing the cycles seen in Examples \ref{ex: circle2} and \ref{ex: parasite} may appear with positive probability. As the number of players goes to infinity, a network allowing for best-response cycles will appear almost surely.

We now turn to a formal description of the above idea. For $i,j\in I$ with $i\neq j$ let the weight $w_{ij}$ be the realization of a random variable denoted by $\tw_{ij}$.

\begin{assumption}\label{ass: ind}
For every $i,i',j,j'\in I$ such that $\{i,j\}\neq \{i',j'\}$ the variables $\tw_{ij}$ and $\tw_{i'j'}$ are independent.
\end{assumption}

\noindent Assumption \ref{ass: ind} states that any two distributions producing weights not belonging to the same pair of players are independent. This assumption is made mostly for convenience, our results may be obtained with less stringent assumptions. Crucially, we allow the two weights describing the interaction between a pair of players to be dependent.

Let $w^-,\uw,\ow>0$ be given such that $\uw<\ow$. We introduce the following notation:
\begin{enumerate}
\item $P_0=\min_{i,j\in I}(P(\tw_{ij}=\tw_{ji}=0))$,
\item $P_1(\uw)=\min_{i,j\in I}(\min\{P(\tw_{ij}\geq\uw,\tw_{ji}=0),P(\tw_{ji}\geq\uw,\tw_{ij}=0)\})$,
\item $P_2(w^-,\uw,\ow)= $ \\
$\min_{i,j\in I}(\min\{P(\tw_{ij}\in[\uw,\ow],\tw_{ji}\leq -w^-), P(\tw_{ji}\in[\uw,\ow],\tw_{ij}\leq -w^-)\})$.
\end{enumerate}

\noindent In words, the smallest probability that two players are indifferent to one another is denoted by $P_0$, the smallest probability of there being a one-directional link of at least strength $\uw$ between an ordered pair is denoted by $P_1(\uw)$, and the smallest probability that an ordered pair has a parasitic link such that the parasite's effect on the host is at least $w^-$ and the host's effect on the parasite falls into the interval $[\uw,\ow]$ is denoted by $P_2(w^-,\uw,\ow)$.

\begin{definition}\label{def: zpn}
We say that the random process generating a network satisfies/allows for
\begin{itemize}
\item \textit{no forced interaction} (NFI) if $P^0>0$,
\item \textit{directed interactions} (DI) if there exists $\uw>0$ such that $P_1(\uw)>0$.
\item \textit{biparasitism} (BP) if there exist $w^-,\uw,\ow>0$ such that $P_2(w^-,\uw,\ow)>0$,
\end{itemize}
\end{definition}

\noindent The properties listed in Definition \ref{def: zpn} are interpreted as follows: \textit{NFI} means that any two players have a positive probability of mutual indifference, meaning that at least a fraction of interaction weights will be zero. This is a feature of the most basic growing random network models where average degree of players is kept constant as $n$ goes to infinity, thus, in fact, most interaction weights are zero. Our assumption is much weaker. \textit{DI} means that every player may unilaterally influence any other with positive probability, a possibility of one-way indifference, while \textit{BP} means that any player may be a parasite of any other with positive probability, an ex-ante possibility of parasitism between any ordered pair. Notice that the three properties impose no restriction on the ex-post realizations of weights. Since all three numbers can be arbitrarily small, the ex-ante restriction is also mild.

\begin{theorem}\label{th: dircycle}
As $n$ goes to infinity, if the random process generating the network satisfies {\normalfont NFI} and {\normalfont DI}, then the resulting directed network game allows for a best-response cycle almost surely.
\end{theorem}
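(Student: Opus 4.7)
The plan is to follow the intuition given immediately before the statement: directed 3-cycles of the type in Example \ref{ex: circle2} arise in any fixed triple of players with a positive probability depending only on the weight distribution, so as $n$ grows they appear in the realized network almost surely, and such a subnetwork can then be used to engineer a best-response cycle of the whole game. The argument splits into a probabilistic part, establishing the almost-sure appearance of the substructure, and a dynamic part, converting it into a BRD cycle.

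For the probabilistic part, I fix $\uw>0$ with $P_1(\uw)>0$ from DI and partition the players into $m=\lfloor n/3\rfloor$ disjoint triples $T_1,\ldots,T_m$. For each $T_k=\{a_k,b_k,c_k\}$ I define the event $E_k$ that the three intra-triple ordered pairs are directed edges of Example \ref{ex: circle2}'s type: $\tw_{b_k a_k}\geq \uw$, $\tw_{a_k b_k}=0$; $\tw_{c_k b_k}\geq \uw$, $\tw_{b_k c_k}=0$; $\tw_{a_k c_k}\geq \uw$, $\tw_{c_k a_k}=0$. By the definition of $P_1(\uw)$ and Assumption \ref{ass: ind} applied to the three disjoint pairs inside the triple, $P(E_k)\geq P_1(\uw)^3>0$. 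Because the triples are pairwise disjoint, the events $E_1,\ldots,E_m$ depend on disjoint sets of ordered pairs, so a further application of Assumption \ref{ass: ind} makes them mutually independent; hence
\[
P\left(\bigcap_{k=1}^{m} E_k^{c}\right)\leq \bigl(1-P_1(\uw)^3\bigr)^{m}\xrightarrow[n\to\infty]{} 0,
\]
and almost surely at least one $E_k$ holds for $n$ sufficiently large.

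For the dynamic part, I fix a triple $T=\{a,b,c\}$ on which the cycle event holds. If the outside players' actions are frozen at any $\bar x_{-T}$, the triple's best-response map reduces to a three-player isolated directed 3-cycle with targets shifted to $t_i-\sum_{j\notin T}w_{ij}\bar x_j$, which is structurally the same as Example \ref{ex: circle2}; the boundary-truncation mechanism of that example then yields a six-period orbit of the triple's corner profiles, provided the shifted targets and bounds $\xh_i$ remain compatible with saturation at $0$ and $\xh_i$. I would build the full BRD by interleaving these six cyclic triple updates with sweeps through the remaining $n-3$ players, initializing the outside at a Nash equilibrium of the restricted subgame on $I\setminus T$ with the triple's action fixed at the starting corner — such an equilibrium exists because the restricted subgame is itself a directed network game in the sense of Definition \ref{def: dng} and Proposition \ref{pro: nash} applies. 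The hardest step, and the main obstacle, is arranging that each sweep returns the outside vector to this same equilibrium so that after six triple updates the entire profile repeats, yielding a cycle in the sense of Definition \ref{def: br cycle}. Here NFI is essential: I would strengthen $E_k$ by additionally requiring enough of the outside-to-triple and triple-to-outside weights to vanish (an event with probability bounded below per triple thanks to NFI) so that the perturbations $\sum_{j\notin T}|w_{ij}|\xh_j$ stay below the boundary slack of the Example \ref{ex: circle2} orbit and the outside subgame's equilibrium remains invariant under the triple's cycling. Provided this strengthened event still has probability bounded below per triple by a positive constant, the independence across the $m$ disjoint triples keeps the failure probability exponentially small in $n$, and the argument carries through.
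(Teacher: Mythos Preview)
Your probabilistic skeleton is sound, but there are two genuine gaps in the dynamic part, and the paper's proof handles both differently.

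\textbf{A single triple is not strong enough.} You invoke Example~\ref{ex: circle2} for the triple $T$, but that example works because the link weight equals~$1$ and all targets equal~$1$, so one active neighbour drives the next player's unconstrained best response exactly to~$0$. Here you only know $w\geq\uw$ for some possibly small $\uw>0$, and the (shifted) targets are arbitrary in $(0,\xh_i)$. There is no reason why $\uw\,\hat t_a\geq \hat t_b$, so the corner orbit need not exist. The paper's remedy is to replace each vertex of the $3$-cycle by a \emph{group} of $m$ players, with $m$ chosen once and for all so that $m\,\uw\,\ut\geq\ot$ (where $\ut=\min_i t_i$, $\ot=\max_i t_i$). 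Then the combined spillover from one fully active group onto any member of the next group exceeds her target, forcing her best response to~$0$ regardless of the particular realized weights. NFI is used here, not to isolate the triple from the outside, but to make the $m$ players \emph{within} each group mutually indifferent. The number of weight constraints is then $O(m^2)$, a constant independent of~$n$, so the per-block success probability stays bounded away from zero.

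\textbf{Your treatment of the outside players does not close.} Freezing $I\setminus T$ at a restricted Nash equilibrium and then ``strengthening $E_k$'' by asking enough outside--triple weights to vanish imposes order-$n$ additional NFI constraints on each block. The per-block probability then decays like $P_0^{\,c n}$, which goes to zero faster than the number $n/3$ of blocks grows, so the Borel--Cantelli style bound $1-(1-p_n)^{n/3}$ no longer tends to~$1$. Moreover, once you include cross-triple weights in the strengthened events, the events for different triples are no longer independent. The paper avoids all of this: it simply sets every player outside the three groups to~$0$ and never updates them during the cycle. Because $w_{ij}\cdot 0=0$ for every~$j$, the outside weights are irrelevant to the group players' best responses, and the six-step orbit of group profiles closes exactly, ``no matter the realization of any other weights in the network.'' There is no need to place the outside at an equilibrium or to interleave sweeps through~$I\setminus T$.
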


\begin{proof}
Let $\ut=\min_{i\in I}t_i$ and let $\ot=\max_{i\in I}t_i$. For a fixed $w^-,\uw,\ow$, we shorten the notations $P_1(\uw)$ and $P_2(w^-,\uw,\ow)$ to $P_1$ and $P_2$, respectively.

Let $m$ be such that $m\ut\uw\geq\ot$. For some $\ell\in\mathbb{N}$ fix $n$ such that $n=3m\ell$.

We construct three sets of $m$ players, $I_1$, $I_2$, and $I_3$, such that
\begin{itemize}
\item for every pair of players $i,j$, $i\neq j$, belonging to the same group, $w_{ij}=w_{ji}=0$, and
\item for every trio of players $i^1\in I^1$, $i^2\in I^2$, $i^3\in I^3$, it holds that $w_{i^1i^2}=w_{i^2i^3}=w_{i^3i^1}=0$, $w_{i^1i^3},w_{i^2i^1},w_{i^3i^2}>\uw$.
\end{itemize}
\noindent The probability of such realization of weights occurring is at least $P_0^{3m(m-1)/2}P_1^{3m^2}$ which is positive by properties \textit{NFI} and \textit{DI}.

We now show that, given such a group of players, no matter the realization of any other weights in the network, the game has a best-response cycle.

Let $x^1,x^2,x^3\in X$ be given as follows: For $j\in\{1,2,3\}$ $x^j_i=0$ if $i\notin I^j$ and $x^j_i=t_i$ if $i\in I^j$. Let $x^{-1},x^{-2},x^{-3}\in X$ be given as follows: for every $i\in I$ $x^{-1}_i=\max\{x^2_i,x^3_i\}$, $x^{-2}_i=\max\{x^1_i,x^3_i\}$, $x^{-3}_i=\max\{x^1_i,x^2_i\}$.

In profile $x^1$, only the members of group $I^1$ contribute a positive amount, every other player contributes zero. Since $ms\uw\geq\ot$, all members of group $I^2$ are at best response. Since the members of group $I^3$ receive $0$, the profile $x^{-2}$ follows $x^1$ if all members of $I^3$ move to their best responses in any order. Since $ms\uw\geq m\ut\uw\geq\ot$, the best response of the members of group $I^1$ to $x^{-2}$ is to move to $0$. Hence the profile $x^3$ follows $x^{-2}$ by the members of $I^1$ moving to their best responses in any order. Very similarly, $x^{-1}$ follows $x^3$, $x^2$ follows $x^{-1}$, $x^{-3}$ follows $x^2$, and $x^1$ follows $x^{-3}$, completing the cycle.

The probability of three such groups forming in a population of $3m\ell$ players is at least
\begin{equation*}
1-(1-P_0^{3m(m-1)/2}P_1^{3m^2})^\ell.
\end{equation*}
\noindent Let $E_n$ denote the event that a best-response cycle exists in a game of $n$ players. For every $n\geq 0$ it is clear that $P(E_n)\leq P(E_{n+1})\leq 1$, so the sequence $P(E_n)$ is convergent. Therefore, we have
\begin{equation*}
\lim_{n\to\infty}P(E_n)=\lim_{\ell\to\infty}P(E_{3m\ell})\geq 1-\lim_{\ell\to\infty}(1-P_0^{3m(m-1)/2}P_1^{3m^2})^\ell=1.
\end{equation*}
\end{proof}

\begin{theorem}\label{th: parasite}
For $i\in I$ let $\xh_i=\infty$.

As $n$ goes to infinity, if the random process generating the network satisfies {\normalfont NFI} and {\normalfont BP} such that $\ow<\min_{i\in I}t_i/\max_{i\in I}t_i$, then the resulting directed network game allows for a best-response cycle almost surely.
\end{theorem}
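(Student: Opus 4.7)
The approach mirrors Theorem~\ref{th: dircycle}'s proof: exhibit a small internal structure that can carry a best-response cycle of Example~\ref{ex: parasite}'s type with positive probability independent of the rest of the network, then apply independence across disjoint candidate groups to drive the success probability to one as $n\to\infty$. The internal structure is one host $h$ together with enough parasites $p_1,\dots,p_m$ that their aggregate effect on $h$ substitutes for the amplifying condition $|w_{12}w_{21}|\geq 1$ present in Example~\ref{ex: parasite}.

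I would first fix $w^-,\uw,\ow>0$ as granted by BP and then pick an integer $m$ with
\[m\;\geq\;\frac{\ot}{\uw\,w^-(\ut-\ow\,\ot)},\]
writing $\ut=\min_i t_i$ and $\ot=\max_i t_i$; the denominator is strictly positive by the hypothesis $\ow<\ut/\ot$. Partitioning the players into $\ell=\lfloor n/(m+1)\rfloor$ disjoint candidate groups of size $m+1$, with one player per group designated as the host, I would call a group \emph{successful} when every host-parasite pair satisfies $w_{h,p_k}\leq -w^-$ and $w_{p_k,h}\in[\uw,\ow]$ while every distinct pair of parasites carries zero weights both ways. By NFI, BP, and Assumption~\ref{ass: ind} the probability of success for one group is at least $P_0^{m(m-1)/2}\,P_2(w^-,\uw,\ow)^m>0$, and successes across disjoint groups are independent, so the probability that at least one group succeeds tends to $1$ as $n\to\infty$.

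Given any successful group, I would produce a cycle of length $2m+2$ starting from the profile in which every non-group player is at $0$, the host is at $t_h$, and every parasite is at $0$. Using Lemma~\ref{le: br3} step by step: (i) letting the parasites update in turn sends each $p_k$ to $t_{p_k}-w_{p_k,h}t_h\geq \ut-\ow\,\ot>0$; (ii) letting the host update next yields $x_h\geq t_h+m\,w^-(\ut-\ow\,\ot)\geq \ot/\uw$, not truncated because $\xh_h=\infty$; (iii) letting the parasites update a second time gives each an unconstrained best response $t_{p_k}-w_{p_k,h}x_h\leq t_{p_k}-\ot\leq 0$, truncated to $0$ by Lemma~\ref{le: br3}; (iv) the host's next update returns her to $t_h$. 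The resulting profile coincides with the starting one while the intermediate profiles differ, and extending the sequence to an infinite one-sided dynamic yields a BRD with a cycle as in Definition~\ref{def: br cycle}.

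The main obstacle is the amplification step, since a single parasite on a single host need not cycle once $|w_{12}w_{21}|\geq 1$ fails. Allocating $m$ parasites to one host bypasses this difficulty: their contributions to the host's best response add up, and the choice of $m$ above forces the post-update host value above the threshold $\ot/\uw$ at which every parasite is driven back to $0$. The hypothesis $\ow<\ut/\ot$ plays a dual role along the way, ensuring both that the parasites' first moves leave the initial profile and that the factor $\ut-\ow\,\ot$ driving the amplification is strictly positive.
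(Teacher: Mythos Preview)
Your proposal is correct and follows essentially the same route as the paper: one host with $m$ mutually unlinked parasites, the four-phase cycle (parasites activate, host overshoots, parasites drop to zero, host returns to $t_h$), and the same $P_0^{m(m-1)/2}P_2^m$ lower bound combined with independence across $\lfloor n/(m+1)\rfloor$ disjoint candidate groups. Your threshold for $m$ differs cosmetically from the paper's $\ot-\uw\ut<m\uw(\ut-\ow\ot)$ and is in fact more careful in retaining the factor $w^-$, which the paper's displayed bound on $\hat b_i(x^3)$ appears to drop.
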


\begin{proof}
We introduce the notation $z=\ut-\ow\ot$. Notice that $z>0$ by the assumption of the Theorem. Fix $m\in\mathbb{N}$ such that $\ot-\uw\ut<m\uw z$.

We construct a set of $m$ players, $I_1$, and a player $j\in I\setminus I_1$ such that
\begin{itemize}
\item for every pair of players $i,i'\in I_1$, $i\neq i'$, we have $w_{ii'}=w_{i'i}=0$, and
\item for every $i\in I_1$ it holds that $w_{ij}\in [\uw,\ow]$ and $w_{ji}\leq -w^-$.
\end{itemize}

\noindent The probability of such a realization of weights happening between $m+1$ players is at least $P_0^{m(m-1)/2}P_2^{m}$, which is positive due to properties \textit{NFI} and \textit{BP}.

We now show that, given such a group of players, no matter the realization of any other weights in the network, the game has a best-response cycle. For $i\in I_1$ let $z_i=t_i-w_{i1}t_j$. Notice that $z_i\geq z>0$. Let $x^1, x^2, x^3, x^4$ be given as follows: For every $i\notin I_1\cup\{j\}$ $x^1_i=x^2_i=x^3_i=x^4_i=0$, let $x^1_j=x^2_j=t_j$ and $x^3_j=x^4_j=t_j-w_{ji}z_i$, while for $i\in I_1$ let $x^1_i=x^4_i=0$ and $x^2_i=x^3_i=z_i$.

In profile $x^1$, player $j$ contributes her target, the members of $I^1$ contribute zero. Since $b_i(x^1)=z_i$, a non-negative value by $z_j\geq z>0$, the profile $x^2$ follows from $x^1$ by the members of $I^1$ moving to best response in any order. Since $\xh_j=\infty$, it also holds that $b_j(x^2)=t_j-\sum_{i\in I^1}w_{ji}z_i$, a non-negative value by $w_{ji}\leq -w^- < 0$, hence the profile $x^3$ follows from $x^2$ by $j$ moving to her best response. For $i\in I_1$ we have

\[\hat{b}_i(x^3)=z_i+w_{ij}\sum_{i'\in I^1}w_{ji'}z_{i'}\leq \ot-\uw\ut-m\uw z<0\]
\noindent by the choice of $m$. Hence, for $i\in I^1$ we have $b_i(x^3)=0$ and $x^4$ follows $x^3$ by the members of $I^1$ moving to best response in any order. Finally, $x^1$ follows $x^4$ by player $j$ moving to her best response, completing the cycle.

The probability of such groups forming in a population of $(m+1)\ell$ players is at least
\begin{equation}
1-(1-P_0^{m(m-1)/2}P_2^{m})^\ell.
\end{equation}
\noindent Let $E_n$ denote the event that a best-response cycle exists in a game of $n$ players. For every $n\geq 0$ it is clear that $P(E_n)\leq P(E_{n+1})\leq 1$, hence the sequence $P(E_n)$ is convergent. Then we have
\begin{equation*}
\lim_{n\to\infty}P(E_n)=\lim_{\ell\to\infty}P(E_{(m+1)\ell})\geq 1-\lim_{\ell\to\infty}(1-P_0^{m(m-1)/2}P_2^{m})^\ell=1.
\end{equation*}
\end{proof}

Theorems \ref{th: dircycle} and \ref{th: parasite} establish that, given \textit{NFI}, either of \textit{BP} and \textit{DI} are enough to ensure that large networks will allow for best-response cycles with some additional assumptions on the parameters in the latter case. These properties are mild, in terms of the restrictions they impose on the probability distributions on the weights, as well as intuitive, in terms of what they mean for the links between the players. In large networks, even small likelihood of one-way interactions or parasitic interactions lead to the cycles similar to Examples \ref{ex: circle2} and \ref{ex: parasite} almost surely. Thus we identify directed cycles and parasitism as important stumbling blocks of convergence. Both theorems generalize to non-independent distributions of weights between different pairs of players if the probabilities defined in Definition \ref{def: zpn} are positive, conditional on the possible realizations of other weights.

\newpage
\bibliography{bibliography}
\bibliographystyle{apalike}

\end{document}